\def\proof{\noindent\hspace{2em}{\itshape Proof: }}
\def\QEDclosed{\mbox{\rule[0pt]{1.3ex}{1.3ex}}} 
\def\QED{\QEDclosed} 
\def\endproof{\hspace*{\fill}~\QED\par\endtrivlist\unskip}
\newtheorem{theorem}{Theorem}
\newtheorem{definition}[theorem]{Definition}
\newtheorem{proposition}[theorem]{Proposition}
\newtheorem{lemma}[theorem]{Lemma}
\theoremstyle{remark}
\newcommand{\eqa}{\begin{eqnarray}}
\newcommand{\eeqa}{\end{eqnarray}}
\newcommand{\beq}{\begin{equation}}
\newcommand{\eeq}{\end{equation}}
\newcommand{\nn}{\nonumber}
\def\d{\partial}
\def\f{\frac}
\numberwithin{equation}{section}
\begin{document}

\title{On bi-Hamiltonian deformations of exact pencils of hydrodynamic type}
\author{Alessandro Arsie}
\address{Department of Mathematics, University of Toledo, 2801 W. Bancroft Street, Toledo OH 43604, USA}
\email{alessandro.arsie@utoledo.edu}
\author{Paolo Lorenzoni}
\address{Dipartimento di Matematica, Universit{\`a} di Milano - Bicocca,
Via Roberto Cozzi 53, I-20125 Milano, Italy}
\email{paolo.lorenzoni@unimib.it}

\subjclass[2010]{Primary: 37K10; Secondary: 35A30}
\footnote{PACS numbers: 02.30.Ik,02.30.Jr, 02.40.Vh.}


\keywords{}
\thanks{Partial support from the University of Toledo  URAF grant ``Lax equation, integrability and applications" is gratefully acknowledged}
\begin{abstract}
In this paper we are interested in non trivial bi-Hamiltonian deformations of the Poisson pencil 
 $\omega_{\lambda}=\omega_2+\lambda \omega_1=u\delta'(x-y)+\f{1}{2}u_x\delta(x-y)+\lambda\delta'(x-y)$.
Deformations are generated by a sequence of vector fields $\{X_2, X_3, X_4, \dots\}$,  where each $X_{k}$ is homogenous of degree $k$ with respect to a grading induced by rescaling.  Constructing recursively the vector fields $X_{k}$ one obtains two types of relations involving their unknown coefficients: one set of linear relations and  an other one which involves quadratic relations. We prove that the set of linear relations has a geometric meaning: using Miura-quasitriviality the set of linear relations expresses the tangency of the vector fields $X_{k}$ to the symplectic leaves of $\omega_1$ and this  tangency condition is equivalent to the exactness of the pencil $\omega_{\lambda}$.
Moreover, extending the results of \cite{L}, we construct the non trivial deformations of the Poisson pencil 
 $\omega_{\lambda}$, up to the eighth order in the deformation parameter, showing therefore that deformations are unobstructed and that both Poisson structures are polynomial in the derivatives of $u$  up to that order.
 
\end{abstract}

\maketitle
\section{Introduction}
One of the most relevant problem in the modern theory of integrable systems is the classification of systems of PDEs of evolutionary type of the following form: 
\begin{equation}\label{eq1.eq}
u^i_t=F^i(u, u_x, u_{xx}, \dots, u_{(n)}, \dots)
\end{equation}
where $u^i$ are functions of variables $(x,t)$, $u^i_x$ denotes first derivative of $u^i$ with respect to $x$ and $u^i_{(n)}$ is the $n$-th derivative with respect to $x$ and $i=1, \dots, N$. In particular  the functions $u^i$ are required to satisfy the usual boundary conditions of the Formal Calculus of Variations, i.e. either $u^i\in C^{\infty}(S^1, \mathbb{R})$ or $u^i \in C^{\infty}(\mathbb{R}, \mathbb{R})$ with vanishing conditions at infinity. 
To fix the ideas, in this work we will restrict out attention to the periodic case.  Moreover we assume that the functions $F^i$ are polynomials in the derivatives with respect $x$ of the functions $u^j$.

Systems of the form \eqref{eq1.eq} can be thought as dynamical systems on the space of infinite jets $J^{\infty}(S^1, \mathbb{R}^n)$ considering the complete set of differential consequences of \eqref{eq1.eq}: 
$$(u^i_{(m)})_t=\partial^m_{x} F^i(u, u_x, u_{xx}, \dots, u_{(n)}, \dots),$$
where $\partial_x$ is the total derivative with respect to $x$ acting on $F^j$ as follows:
\begin{equation}\label{eq2.eq}
\partial_x F^j(u, u_x, u_{xx}, \dots, u_{(n)}, \dots):=\sum_{k=0}^{\infty}\sum_{i=1}^N\frac{\partial F^j}{\partial u^i_{(k)}}u^i_{(k+1)}
\end{equation}
and $u^i_{(0)}$ stands for $u^i$. 

Rescaling the independent variables $t\mapsto \epsilon t$, $x \mapsto \epsilon x$, the system \eqref{eq1.eq} transforms to 
\begin{equation}\label{eq3.eq}
u^i_t=\frac{1}{\epsilon}F^i_0(u)+ V^i_j(u) u^j_x+ \sum_{k=1}^\infty \epsilon^k F_k^i(u, u_x, u_{xx}, \dots, u_{(n)}, \dots)
\end{equation}
where $F^i_0$ do not depend on the derivatives of the functions $u$ with respect to $x$, and $F^i_k$ are homogeneous polynomials of degree $k$ in the derivatives of $u$, assigning degree $0$ to functions of $u$ and $\deg(u^i_{(l)})=l$.
Even though it is not strictly necessary, the rescaling of the independent variables and the ensuing presence of the parameter $\epsilon$ have the effect to clearly separate the various homogenous components: this is a key aspect of the perturbative approach to the classification problem. 

In particular in this work we will be concerned with the classification of systems of PDEs of type \eqref{eq3.eq} having a well-defined dispersionless limit  as $\epsilon$ goes to zero. In order to have a well-defined system in this case, it is necessary to restrict our analysis to the family of PDEs in which $F^i_0(u)$ is identically zero. 
Therefore, from now on we restrict our attention to systems of the following form:
\begin{equation}\label{eq4.eq}
u^i_t= V^i_j(u) u^j_x+ \sum_{k=1}^\infty \epsilon^k F_k^i(u, u_x, u_{xx}, \dots, u_{(n)}, \dots).
\end{equation}

The main idea of the perturbative approach to classification is to deal with the terms $F^i_k$, $k\geq 1$ as perturbations of the dispersionless limit $u^i_t=V^i_j(u) u^j_x$ and to reconstruct the integrability of the full system \eqref{eq4.eq} starting from the integrability \cite{T} of the quasilinear system 
\begin{equation}\label{qls}
u^i_t=V^i_j(u) u^j_x.
\end{equation}

A trivial way to produce integrable perturbations starting from a quasilinear system \eqref{qls} consists
 in a change of dependent variables of the form
\begin{equation}
\label{Miuratr}
\tilde{u}^i=F^i_0(u)+\sum_k\epsilon^k F^i_k(u,u_x,u_{xx},\dots)
\end{equation}
where $F_k$ are differential polynomials (in the derivatives of the $u^i$s) of degree $k$ and ${\rm det}\f{\d F_0}{\d u}\ne 0$. Such transformations are called \emph{Miura transformations}. We will not  assume the convergence of the series in the right hand side of \eqref{Miuratr}.  In this formal setting two systems  of the form \eqref{eq4.eq} which are related by a Miura transformation will be considered \emph{equivalent}.

In the perturbative point of view to the classification, different approaches are possible (and have been explored)
 
- Fix a local Hamiltonian structure and extend to all order the conservation laws of the unperturbed system in a recursive way \cite{D}. 

- Extend  to all orders the symmetries of the hydrodynamic limit \cite{S}.

- One approach is based on the additional assumptions  that the systems  \eqref{eq1.eq} one is dealing with  are reductions of a $(2+1)$ integrable PDE \cite{FM}. 

- One other approach is the approach proposed by Dubrovin and Zhang in \cite{DZ1}. It can be applied to
 quasilinear system possessing a local bi-Hamiltonian structure. In this case, instead of studying the deformations of the system it is more convenient to study and classify the deformations of its bi-Hamiltonian structure. 

The aim of this paper is to apply the approach of Dubrovin and Zhang to the simplest possible case:
 the local bi-Hamiltonian structure 
\begin{equation}\label{MPP}
\{u(x),u(y)\}_{\lambda}=2u\delta'(x-y)+u_x\delta(x-y)
-\lambda\delta'(x-y).
\end{equation}
of the Hopf equation
$$u_t=uu_x.$$
The deformations of this structure has been classified,  up to the fourth order, in \cite{L}. 
They depend on a certain number of parameters. All  these parameters, except one, are irrelevant and correspond to Miura equivalent deformations. The remaining one parametrizes the space of non equivalent
 deformations.

The starting observation of the present paper was that, using the freedom in the choice 
 of the irrelevant parameters, one can write the fourth order deformations of the pencil \eqref{MPP}
 in the simple form
\begin{eqnarray*}
&&2u(x)\delta'(x-y)+u_x\delta(x-y)-\lambda\delta'(x-y)\\
&&+\epsilon^2\left\{\d_x\left[c_2(u)\delta^{(2)}(x-y)\right]+\d_x^2\left[c_2(u)\delta'(x-y)\right]\right\}\\
&&+\epsilon^3\left\{-\d_x\left[c_3(u)\delta^{(3)}(x-y)\right]+\d_x^3\left[c_3(u)\delta'(x-y)\right]\right\}+\\
&&+\epsilon^4\left\{\d_x\left[c_4(u)\delta^{(4)}(x-y)\right]+\d_x^4\left[c_4(u)\delta'(x-y)\right]\right\}\\
&&+O(\epsilon^5)
\end{eqnarray*}
where $c_2$ is the relevant functional parameter, $c_3$ is a free parameter, and $c_4=-\f{\d}{\d u}(c_2)^2$. 
 This observation suggested us to look for higher order deformations of the same form:
\begin{eqnarray*}
&&2u(x)\delta'(x-y)+u_x\delta(x-y)-\lambda\delta'(x-y)+\\ 
&&\sum_k\epsilon^k\left\{(-1)^k\d_x\left[c_k(u)\delta^{(k)}(x-y)\right]+\d_x^k\left[c_k(u)\delta'(x-y)\right]\right\}.
\end{eqnarray*}
For a suitable choice of the functions $c_k$ one obtains 
 the two known cases corresponding to KdV and Camassa-Holm hierarchy. In the KdV case
 the function $c_2$ is constant and all the remaining functions vanish, while in the Camassa-Holm 
 case we have $c_k=u$ for all $k\ge 2$, $k$ even, while $c_k=-u$ for $k\geq3$, $k$ odd. So, motivated by this preliminary observation, we started
 to study higher order deformations of \eqref{MPP}. We realized very soon (at the order six) that
 our optimistic conjecture about the form of the deformations was wrong. However we realized also
 that part of the constraints imposed by the Jacobi identity has a simple geometrical interpretation:
 the deformed Poisson pencil can be always written in the form
$${\rm Lie}_{X_{\epsilon}}\delta'(x-y)+O(\epsilon^8)-\lambda\delta'(x-y)$$
where the vector field $X_{\epsilon}$ is always \emph{tangent} to the symplectic leaves of $\delta'(x-y)$.
   Using some important results due to Dubrovin,  Liu and Zhang, we have been able to prove this result to all orders:
   indeed we prove that the tangency of $X_{\epsilon}$ to the symplectic leaves of $\delta'(x-y)$ is valid at any order in the deformation parameter and depends crucially on the \emph{exactness} of the 
 pencil \eqref{MPP} (indeed it is an equivalent condition).
\newline
\newline
The paper is organized as follows. In Section 2 we fix the notations and we review the basic results about
 the subject. This section does not contain new material,
except that we provide a complete proof of Proposition 4, which was instead
sketched in [15]. Due to the amount of results accumulated in the last years we believe that it might be helpful for the reader to have a short review of them. Moreover part of these results will be used in the remaining sections. In Section 3 we recall a powerful formalism developed by several authors which is very convenient to make computations and we state two computational lemmas that are proved in a final Appendix. Although proofs are not difficult, we have not been able to locate them in the literature. 
In Section 4 we use such a formalism to show that the form of the deformations suggested by lower order deformations is unfortunately too optimistic. In Section 5 we prove that the vector field generating the deformation $X_{\epsilon}$ is tangent to the symplectic leaves of $\delta'(x-y)$ at any order (provided the deformation exists) if and only if the undeformed Poisson pencil is exact. Always in Section 4 we find the general form for a scalar pencil to be exact. These results will help to simplify the computations (performed with the help of Maple) of the subsequent Section 6. The corrective terms are computed  in Section 6 where we extend the results of \cite{L} up to the eighth order in the deformation parameter, proving that the deformation is unobstructed up to that order. Section 7 provides some remarks and final comments.

\section{The Dubrovin-Zhang bi-Hamiltonian approach}

A very important class of integrable systems is given by bi-Hamiltonian systems introduced for the first time in \cite{M}.
A system is bi-Hamiltonian if it can be written as a Hamiltonian system with respect to two compatible Poisson structures $\omega_1$ and $\omega_2$, where compatibility conditions entails the fact that $\omega_1+\lambda \omega_2$ is a Poisson structure for any value of $\lambda\in \mathbb{R}$. 
For this class of systems, the presence of a bi-Hamiltonian structure captures all the integrability properties. 

In the case of systems of hydrodynamic type, the class of Hamiltonian structures to be considered was introduced by Dubrovin and Novikov. Let us briefly outline the key points in their construction.
Consider functionals 
$${\mathcal F}[u]:=\int_{S^1} f(u, u_{x}, u_{xx}, \dots) \; dx,$$
and
$$G[u]:=\int_{S^1}g(u, u_{x}, u_{xx}, \dots)\; dx$$
and define a bracket between them as follows:
\begin{equation}\label{eq6.eq}
\{F, G\}[u]:=\iint_{S^1\times S^1}\frac{\delta F}{\delta u^i(x)}\, P^{ij}(x,y)\, \frac{\delta G}{\delta u^j(y)}\;dxdy,
\end{equation}
where $\frac{\delta }{\delta u^i}$ denotes the variational derivative with respect to $u^i$ and the bivector $P^{ij}$ has the following form 
\begin{equation}\label{eq7.eq}
P^{ij}=g^{ij} \delta'(x-y)+\Gamma^{ij}_k u^k_{x} \delta(x-y).
\end{equation}

A deep result of Dubrovin and Novikov characterizes under which conditions the bracket \eqref{eq6.eq} is Poisson: 
\begin{theorem}\label{th1.th}\cite{DN84}
If $\det(g^{ij})\neq 0$, then the bracket \eqref{eq6.eq} is Poisson if and only if the metric $g^{ij}$ is flat and the functions $\Gamma^{ij}_k$ are related to the Christoffel symbols of $g_{ij}$ (the inverse of $g^{ij}$) by the formula $\Gamma^{ij}_k=-g^{il} \Gamma^j_{lk}$.
\end{theorem}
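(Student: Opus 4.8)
The plan is to reduce the Jacobi identity and skew-symmetry for the bracket \eqref{eq6.eq} to pointwise tensorial conditions on the coefficients $g^{ij}$ and $\Gamma^{ij}_k$, and then to recognize these conditions as flatness of the metric together with the prescribed relation between $\Gamma^{ij}_k$ and the Christoffel symbols. First I would write out the antisymmetry requirement $P^{ij}(x,y)=-P^{ji}(y,x)$ for the local bivector in \eqref{eq7.eq}. Expanding $\delta'(x-y)=-\delta'(y-x)$ and using the distributional identity $h(x)\delta'(x-y)=h(y)\delta'(x-y)+h_x\delta(x-y)$ to move all coefficients to, say, the $x$-argument, the skew-symmetry forces $g^{ij}=g^{ji}$ (so the leading coefficient is a genuine symmetric metric) and yields the relation $\d_x g^{ij}=(\Gamma^{ij}_k+\Gamma^{ji}_k)u^k_x$, i.e.\ $\d_l g^{ij}=\Gamma^{ij}_l+\Gamma^{ji}_l$. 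Under the substitution $\Gamma^{ij}_k=-g^{il}\Gamma^j_{lk}$ this last identity is exactly the metricity condition $\d_l g^{ij}+g^{im}\Gamma^j_{ml}+g^{jm}\Gamma^i_{ml}=0$, which says that the $\Gamma^j_{lk}$ are the Christoffel symbols of a connection compatible with $g_{ij}$.

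The substantial step is the Jacobi identity, which I would impose in the form
\[
\iint\!\!\int \frac{\delta}{\delta u^i(x)}\!\left(P^{jk}(y,z)\right)\,P^{is}\;+\;(\text{cyclic in }x,y,z)\;=\;0.
\]
The efficient way to organize this is the standard Dubrovin–Novikov device: introduce the formal symbol obtained by replacing $\delta'(x-y)$ by a momentum variable $p$ and $u^k_x$ by its symbol, so that the bracket becomes a polynomial in $p$, and the Jacobi identity becomes a finite system of equations, one for each power of $p$ arising in the triple sum. Collecting coefficients of the highest-order distributional terms ($\delta'''$, or equivalently the top power of $p$), the only surviving constraint is purely algebraic and, after using $\Gamma^{ij}_k=-g^{il}\Gamma^j_{lk}$ and the nondegeneracy $\det(g^{ij})\neq 0$, reduces to the \emph{symmetry of the connection}, $\Gamma^j_{lk}=\Gamma^j_{kl}$ (torsion-freeness). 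Combined with metricity from the skew-symmetry step, this identifies $\Gamma^j_{lk}$ uniquely as the Levi-Civita connection of $g_{ij}$.

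The remaining, lower-order coefficients in the expansion of the Jacobi identity then collapse to a single tensorial equation, which I expect to be the main obstacle to write out cleanly: it should reduce, after repeated use of metricity, torsion-freeness, and integration-by-parts on the delta functions, to the vanishing of the Riemann curvature tensor $R^{ij}_{kl}=0$ of the Levi-Civita connection. The computation is a bookkeeping exercise in the distributional identities for $\delta(x-y)$ and its derivatives, but the algebra is unavoidable; the key is that once skew-symmetry has pinned down metricity and the top-order Jacobi term has pinned down torsion-freeness, the only tensor that can obstruct is the curvature, and its coefficient is forced to vanish. Conversely, I would check that if $g^{ij}$ is flat and $\Gamma^{ij}_k=-g^{il}\Gamma^j_{lk}$ with $\Gamma^j_{lk}$ the Christoffel symbols, then passing to flat coordinates (in which $g^{ij}$ is constant and $\Gamma^{ij}_k=0$) makes the bracket $g^{ij}\delta'(x-y)$ with constant $g^{ij}$, whose Poisson property is immediate; since the Poisson property is invariant under the change of coordinates, this establishes the sufficiency direction and completes the equivalence.
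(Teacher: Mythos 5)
The paper does not prove this statement: Theorem 1 is quoted from Dubrovin and Novikov \cite{DN84} and used as a black box, so there is no internal proof to compare yours against. Judged on its own merits, your outline follows the classical Dubrovin--Novikov argument and its skeleton is correct: skew-symmetry of the bracket is equivalent to $g^{ij}=g^{ji}$ together with $\partial_k g^{ij}=\Gamma^{ij}_k+\Gamma^{ji}_k$, which becomes metric compatibility once one writes $\Gamma^{ij}_k=-g^{il}\Gamma^j_{lk}$ (a legitimate change of unknown, since $\det(g^{ij})\neq 0$ lets you define $\Gamma^j_{lk}$ this way); the leading-order part of the Jacobi identity yields the algebraic condition $g^{is}\Gamma^{jk}_s=g^{js}\Gamma^{ik}_s$, which by nondegeneracy is exactly torsion-freeness, so the connection is pinned down as Levi-Civita; and the converse direction is immediate in flat coordinates, where the bracket has constant coefficients, because the Poisson property is invariant under point transformations $u\mapsto v(u)$. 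The one place where your text is a plan rather than a proof is the decisive step of the forward direction: you assert (``I expect'', ``it should reduce'') that the remaining coefficients of the Jacobi identity collapse to the single tensorial condition $R^{ij}_{kl}=0$, but this computation \emph{is} the substantive content of the theorem. A complete argument must collect the coefficients of the independent distributional monomials (terms proportional to $\delta''(x-y)\delta(y-z)$, $\delta'(x-y)\delta'(y-z)$, $\delta'(x-y)\delta(y-z)u^k_x$, and so on), and verify that, modulo the already-derived metricity and torsion-freeness, the surviving equations are equivalent to the vanishing of the curvature; in particular some of the resulting equations are not manifestly tensorial, and one has to check that they are differential consequences of the others rather than additional constraints. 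So: right strategy, the standard route from the original literature, but with the central curvature computation left as a declaration of intent rather than carried out.
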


In this work, when referring to brackets of Dubrovin-Novikov in the previous theorem, we will call them brackets of hydrodynamic type, discarding the case of non-local brackets of hydrodynamic type arising from non-flat metrics. 
Let us remark that in the flat coordinates $\{f_1, \dots, f_N\}$, the brackets of form \eqref{eq6.eq} reduce to 
\begin{equation}\label{eq7bis.eq} 
\eta^{ij} \delta'(x-y)
\end{equation}
where $\eta^{ij}$ is a constant matrix. Observe also that such a bracket is clearly degenerate and its Casimirs are the integrals of the flat coordinates:  
$$C_j=\int_{S^1} f_j \;dx.$$

In this set-up we have the following definition:
\begin{definition}\label{def1.def}
 A bi-Hamiltonian structure of hydrodynamic type is given by a pair of Poisson bivectors $P^{ij}_1$, $P^{ij}_2$ satisfying separately the conditions of Theorem \ref{th1.th} (for two different contravariant metrics $g^{ij}_1$, $g^{ij}_2$) and satisfying an additional compatibility condition requiring $Q^{ij}_{\lambda}:=P^{ij}_1+\lambda P^{ij}_2$ to be Poisson for any $\lambda$. 
Dubrovin proved that from a differential-geometric point of view the compatibility of $P_1$ and $P_2$ is equivalent to the fact that $g_{\lambda}:=g^{ij}_1+\lambda g^{ij}_2$ is a flat pencil of metrics which means \cite{D07}: 
\begin{enumerate}
\item the Riemann tensor $R_{\lambda}$ of the pencil $g_{\lambda}$ vanishes for any value of $\lambda$;
\item the Christoffel symbols $(\Gamma_{\lambda})^{ij}_{k}$ of the pencil are given by $(\Gamma_1)^{ij}_{k}+\lambda (\Gamma_2)^{ij}_{k}$.
\end{enumerate}
\end{definition}

Since the bi-Hamiltonian structure encodes all the characteristics of an integrable system, Dubrovin and Zhang proposed to study the integrable perturbations of systems of PDEs of the type described above, by studying perturbations of their associated bi-Hamiltonian structure and classifying them modulo Miura transformations. 
 They also conjectured that, under suitable additional assumptions coming from the Gromov-Witten theory,
 the perturbed bi-Hamiltonian hierarchies exist and are uniquely determined by their dispersionless limit. They made also important steps towards the proof of the conjecture. One of the very important missing gaps, concerning the polynomiality of the dispersive corrections was recently filled in the remarkable preprint  \cite{BPS}, where it is proved also that one of the Hamiltonian structure is polynomial. However the polynomiality of the second Hamiltonian structure to all orders in the deformation parameter $\epsilon$ is still an open problem.
 
\subsection{Poisson structures on the space of loops}
On the space of smooth loops $\mathcal{L}(\mathbb{R}^n):=\{h: S^1 \rightarrow \mathbb{R}^n, h\in C^{\infty}\}$ we consider the ring $\mathcal{A}$ of differential polynomials: 
\begin{equation}\label{eq8.eq}
f(x, u, u_x, \dots):=\sum_{i_1, s_1, \dots, i_m, s_m}f_{i_1, s_1; \dots; i_m, s_m}(x, u)u^{i_1}_{(s_1)}\dots u^{i_m}_{(s_m)},
\end{equation}
where $u=(u^1, \dots, u^N)$, $u_{(s)}=(u^1_{(s)}, \dots, u^N_{(s)})$ with $u^i_{(s)}:=\frac{d^s}{dx^s}u^i(x)$. Moreover the coefficients $f_{i_1, s_1; \dots ;i_m, s_m}(x, u)$ of these differential polynomial are required to be smooth functions on $S^1\times \mathbb{R}^n$. 

Denote by $\mathcal{A}_0:=\mathcal{A}/\mathbb{R}$ the space of differential polynomials modulo constants, and 
$\mathcal{A}_1:=\mathcal{A}_0 \; dx$. Then one has a well-defined map $d: \mathcal{A}_0\rightarrow \mathcal{A}_1$:
\begin{equation}\label{eq9.eq}
f\mapsto df:=\left(\frac{\partial f}{\partial x}+\sum_{i, s} \frac{\partial f}{\partial u^i_{(s)}} u^i_{(s+1)} \right)\; dx.\end{equation}
The quotient $\Lambda:=\mathcal{A}_1 /d\mathcal{A}_0$ is called the space of local functionals on $\mathcal{L}(\mathbb{R}^n)$. Elements of $\Lambda$ (local functionals) are expressed as integrals over $S^1$ of a representative differential polynomial: 
\begin{equation}\label{eq10.eq}\Lambda \ni \lambda =\int_{S^1} f(x, u, u_x, \dots, u_{(n)}) \; dx.\end{equation}
Observe that by the fact that we are dealing with suitable boundary conditions, two elements $\lambda_1$ and $\lambda_2$ are identified by a differential polynomial up to a total derivative. 

In order to study Poisson bi-vectors on the loop space, we need to introduce the notion of local multi-vectors (k-vectors). 
A local k-vector $\alpha$ on the loop space $\mathcal{L}(\mathbb{R}^n)$ is defined to be a formal possibly infinite sum 
\begin{equation}\label{eq11.eq}\alpha=\sum \frac{1}{k!} \partial^{s_1}_{x_1}\dots \partial^{s_k}_{x_k}A^{i_1, \dots, i_k} \frac{\partial }{\partial u^{i_1}_{(s_1)}(x_1)}\wedge \dots \wedge \frac{\partial }{\partial u^{i_k}_{(s_k)}(x_k)},\end{equation}
with coefficients 
\begin{equation}\label{eq12.eq}
A^{i_1, \dots, i_k}=\sum_{p_{2}, \dots, p_{k}\geq 0}B^{i_1, \dots, i_k}_{p_{2}, \dots, p_{k}}(u(x_1), u_x(x_1), \dots)\delta^{(p_{2})}(x_1-x_2)\dots, \delta^{(p_{k})}(x_1-x_k).\end{equation}
The coefficients $B^{i_1, \dots, i_k}_{p_{2}, \dots, p_{k}}(u(x_1), u_x(x_1), \dots)$ belong to $\mathcal{A}$, the ring of differential polynomials. The coefficients $A^{i_1, \dots, i_k}$ are skew-symmetric with respect to simultaneous exchange $(i_r, x_r)$ with $(i_s, x_s)$ and they are called the components of the $\alpha$ k-vector. 

The space of local k-vectors is denoted as $\Lambda^k_{\text{loc}}$.
Specializing to the case of $1$-vectors, we obtain the class of local vector fields on the loop space $\mathcal{L}(\mathbb{R}^n)$. They are expressed by the following formula: 
\begin{equation}\label{eq15.eq}\xi=\sum_{i=1}^N \sum_{s\geq 0}\partial^s_x \xi^i(u(x), u_x(x), \dots) \frac{\partial}{\partial u^i_{(s)}}.\end{equation}
Their components do not depend explicitly on $x$ and thus they are called translation invariant evolutionary vector fields. 

The subspace $\Lambda^0_{\text{loc}}$ of $\Lambda_{\text{loc}}$ is identified with the space of local functionals of the form 
\begin{equation}\label{eq16.eq}
F:=\int_{S^1}f(u(x), u_x(x), \dots) \; dx, \quad f(u(x), u_x(x), \dots)\in \mathcal{A}_0.
\end{equation}

Since we will focus our attention to local bivectors, we provide their general expression. A local bivector $\omega$ has the form 
\begin{equation}\label{eq17.eq}
\omega=\frac{1}{2}\sum \partial^r_x \partial^s_y \omega^{ij} \frac{\partial }{\partial u^i_{(r)}(x)}\wedge\frac{\partial }{\partial u^j_{(s)}(y)},
\end{equation}
where 
\begin{equation}\label{eq18.eq}
\omega^{ij}=A^{ij}(x-y; u(x), u_x(x), \dots)=\sum_{t\geq 0}A^{ij}_t(u(x), u_x(x), \dots) \delta^{(t)}(x-y).
\end{equation}

In order to characterize which local bivectors corresponds to a Poisson bivector, it is necessary to introduce on the space of local multi-vectors with its natural gradation 
$$\Lambda^*_{\text{loc}}:=\Lambda^0_{\text{loc}}\oplus \Lambda^1_{\text{loc}}\oplus \Lambda^2_{\text{loc}} \oplus \dots,$$
a bilinear operation:
\begin{equation}
[\cdot, \cdot]: \Lambda^r_{\text{loc}}\times \Lambda^s_{\text{loc}}\rightarrow \Lambda^{r+s-1}_{\text{loc}}, \quad r, s\geq 0,
\end{equation} 
called Schouten-Nijenhuis bracket. 
Let's describe how the Schouten-Nijenhuis bracket operates on certain pairs of local multi-vectors. We have that for any $F, G\in \Lambda^{0}_{\text{loc}}$ $[F, G]=0$ identically, while if $\xi\in \Lambda^1_{\text{loc}}$ is of the form \eqref{eq15.eq} and $F$ is local functional of the form \eqref{eq16.eq}, then
\begin{equation}\label{eq19.eq}
[\xi, F]=\int_{S^1}\sum_{t\geq 0}\sum_{i=1}^N (\partial^t_x \xi^i) \frac{\partial f}{\partial u^i_{(t)}} \; dx=\int_{S^1}\sum_{i=1}^N \xi^i \frac{\delta F}{\delta u^i(x)} \; dx
\end{equation}
where $$\frac{\delta F}{\delta u^i(x)}=\sum_{t\geq 0}(-1)^t \partial^t_x \left( \frac{\partial f}{\partial u^i_{(t)}}\right)$$
is the variational derivative of the local functional $F$. Observe that $[\xi, F]$ is indeed an element of $\Lambda^0_{\text{loc}}$. 
The Schouten-Nijenhuis bracket of two local vector fields $\xi, \eta$ is again a vector field $\mu$ described by the following formula
\begin{equation}\label{eq20.eq}
\mu =[\xi, \eta]=\sum_{s,i} \partial^s_x \mu^i \frac{\partial }{\partial u^i_{(s)}}=\sum_{s,i,j,t}\partial^s_x\left(\xi^j_{(t)}\frac{\partial \eta^i}{\partial u^j_{(t)}}- \eta^j_{(t)}\frac{\partial \xi^i}{\partial u^j_{(t)}}\right) \frac{\partial }{\partial u^i_{(s)}}.
\end{equation}
where the components $\mu^i$ of $\mu$ are given as
$$\mu^i=\sum_{j,t} \xi^j_{(t)}\frac{\partial \eta^i}{\partial u^j_{(t)}}- \eta^j_{(t)}\frac{\partial \xi^i}{\partial u^j_{(t)}}$$ 

The Schouten-Nijenhuis bracket of a local bivector $\omega$ of the form \eqref{eq17.eq} and a local functional $F$ gives rise to a local vector field whose components are
\begin{equation}\label{eq21.eq}
[\omega, F]^i=\sum_{j,k}A^{ij}_k \partial^k_x \frac{\delta F}{\delta u^j(x)}.
\end{equation}
Analogously the Schouten-Nijenhuis bracket of a local bivector $\omega$ and local vector field $\xi$ is again a local bivector whose components are given by 
\begin{eqnarray}\label{eq22.eq}
&&\quad[\omega, \xi]^{ij}=\\
&&\sum_{k,s}\left( \partial^s_x \xi^k(u(x), \dots) \frac{\partial A^{ij}}{\partial u^k_{(s)}(x)}- \frac{\partial \xi^i(u(x), \dots)}{\partial u^k_{(s)}(x)} \partial ^s_xA^{kj}-\frac{\partial \xi^j(u(y), \dots)}{\partial u^k_{(s)}(y)}\partial^s_y A^{ik}\right).\nn
\end{eqnarray}

Finally, if $P$ and $Q$ are two translation invariant bivectors, then their Schouten - Nijenhuis bracket $[P,Q]$ is a translation invariant trivector, whose complicated formula can be found in \cite{DZ1} together with many other details.


\noindent{\bf Remark 1}
{\it An alternative efficient way to compute  Schouten brackets is based on the idea of substituting multivectors with superfunctional
 and Schouten bracket with Poisson brackets between superfunctionals \cite{G}, see also \cite{B}. We recall this formalism in Section 3. 
}

The Schouten-Nijenhuis bracket satisfies also the following properties (a graded Jacobi identity and a graded skew-symmetry):
\begin{equation}\label{eq23.eq}
(-1)^{ki}[[a,b] c]+(-1)^{kl}[[b,c],a]+(-1)^{li}[[c,a],b]=0, \text{ graded Jacobi identity}
\end{equation} 
\begin{equation}\label{eq24.eq}
[a,b]=(-1)^{kl}[b,a], a\in \Lambda^k_{\text{loc}}, b\in \Lambda^l_{\text{loc}}, c\in \Lambda^i_{\text{loc}}.
\end{equation}

The importance of the Schouten-Nijenhuis bracket stems from the fact that a local Poisson structure can be characterized in the following way: 
\begin{definition}\label{def2.def}
A local bivector $\omega\in \Lambda^2_{\text{loc}}$ of the form \eqref{eq17.eq} is a local Poisson structure on $\mathcal{L}(\mathbb{R}^n)$ if $[\omega, \omega]=0$. 
\end{definition} 
A local Poisson structure gives rise to a Poisson bracket on the space of local functionals in the following way: 
\begin{equation}\label{eq25.eq}
\{ F, G\}:=\int_{S^1} \sum_{k\geq 0}\frac{\delta F}{\delta u^i(x)}A^{ij}_k(u, u_x, \dots) \partial^k_x \frac{\delta G}{\delta u^j(x)}\; dx.
\end{equation}
Using a special choice of local functionals $F=\int u^i(w) \delta(w-x) \;dw$, $G=\int u^j(w) \delta(w-y)\; dw$, we recover the usual representation of a Poisson structure as 
$$\{ u^i(x), u^j(x)\}=\sum_{k\geq 0} A^{ij}_k(u(x), u_x(x), \dots)\delta^{(k)}(x-y). $$

As we have seen in the introduction (see equation \eqref{eq3.eq}), the scaling $\psi_{\epsilon}:x\mapsto x\epsilon$ decomposes the evolutionary vector field into homogenous components. Analogously, this same scaling induces a natural gradation on the space $\Lambda^k_{\text{loc}}$. Now we detail how the various ingredients rescale under $\psi_{\epsilon}$. First of all we define $(\psi_{\epsilon} u^i)(x):=u^i(\epsilon x)$.
It is immediate to see that $$[(\psi_{\epsilon} u^i)(x)]_{(s)}=\epsilon^s u^i_{(s)}$$
and consequently
$$\frac{\partial}{\partial u^i_{(s)}(\epsilon x)}=\frac{1}{\epsilon^s}\frac{\partial}{\partial u^i_{(s)}}.$$
Moreover, to see how the $\delta$ distribution and its derivatives rescale, consider 
$$\int f(x) \delta^{(s)}(\epsilon x) \; dx=\int f\left(\frac{z}{\epsilon}\right) \delta^{(s)}(z) \; \frac{dz}{\epsilon}=(-1)^s \int
 \frac{d^s f \left(\frac{z}{\epsilon}\right)}{dz^s} \delta(z)  \; \frac{dz}{\epsilon}=$$
 $$=(-1)^s f_{(s)}(0)\frac{1}{\epsilon^{s+1}}=\int f(x) \delta^{(s)}(x)\frac{1}{\epsilon^{s+1}} \; dx, $$
 from which 
 $$  \delta^{(s)}(\epsilon x)=\delta^{(s)}(x)\frac{1}{\epsilon^{s+1}},$$
 or 
 $$\delta^{(s)}(x)=\epsilon^{s+1}(\psi_{\epsilon}(\delta^{(s)}))(x).$$
 With these information, we can show that the rescaling $\psi_{\epsilon}$ induces a decomposition on $\Lambda^{k}_{\text{loc}}$ into monomials of different degrees. For simplicity we focus on the cases of $\Lambda^{1}_{\text{loc}}$ and $\Lambda^2_{\text{loc}}$. 
Any local vector field has the form 
\begin{equation}\label{eq26.eq}
\xi=\sum_{i=1}^N \sum_{s\geq 0}\partial^s_x \xi^i(u(x), u_x(x), \dots) \frac{\partial}{\partial u^i_{(s)}}, 
\end{equation}
where its components $\xi^i$ are elements of the ring $\mathcal{A}$ of differential polynomials. Since the rescalings induced in $\partial^s_x$ and $ \frac{\partial}{\partial u^i_{(s)}}$ are one the reciprocal of the other, the splitting of the $\xi$ into homogeneous monomials depends only on its components $\xi^i$. 
In general the components $\xi^i$ split under rescaling into homogeneous monomials as follows: 
\begin{equation}\label{eq27.eq}
\xi^i:=a^i(u)+\epsilon\sum_{j=1}^N b^i_j(u) u^j_x +\epsilon^2\left(\sum_{j=1}^N e^i_{j}(u) u^j_{xx}+\sum_{j,l=1}^N h^i_{jl}(u)u^{j}_x u^{l}_x\right)+\dots,
\end{equation}
and this gives rise to the an analogous decomposition as $$\Lambda^1_{\text{loc}}=\bigoplus_{k=0}^{\infty}\Lambda^1_{k, \text{loc}},$$
where $\Lambda^1_{k, \text{loc}}$ is the space of local vector fields $\xi$ whose components $\xi^i$ are 
homogeneous differential polynomials of degree $k$. 
Let's apply the same analysis to the case of elements of $\Lambda^2_{\text{loc}}$. 
Recall that a local bivector $\omega$ is given by 
$$\omega=\frac{1}{2}\sum \partial^r_x \partial^s_y \omega^{ij} \frac{\partial }{\partial u^i_{(r)}(x)}\wedge\frac{\partial }{\partial u^j_{(s)}(y)},$$
where 
$$
\omega^{ij}=A^{ij}(x-y; u(x), u_x(x), \dots)=\sum_{t\geq 0}A^{ij}_t(u(x), u_x(x), \dots) \delta^{(t)}(x-y).$$
Since the terms $ \partial^r_x \partial^s_y$ and $\frac{\partial }{\partial u^i_{(r)}(x)}\wedge\frac{\partial }{\partial u^j_{(s)}(y)}$ have reciprocal scaling factors, the decomposition of $\omega$ in homogeneous monomials under rescaling is completely controlled by the way in which its components $\omega^{ij}$ decompose. 
Rewrite $\omega^{ij}$ as follows:
\begin{equation}\label{eq30.eq}
\sum_{t\geq 0} \sum_{l\geq 0} (A^{ij}_t)_l  \delta^{(t)}(x-y),
\end{equation}
where $(A^{ij}_t)_l$ is the homogenous components of degree $l$ of the differential polynomial $A^{ij}_t$. 
Rescaling under $\psi_{\epsilon}$ gives rise to
$$\sum_{t\geq 0} \sum_{l\geq 0} (A^{ij}_t)_l \epsilon^{l+t+1}  \delta^{(t)}(x-y), $$
which can be rewritten setting $k=l+t+1$ as
\begin{equation}\label{eq31.eq}
\sum_{k=1}^{\infty}\epsilon^k\sum_{t=0}^{k-1} (A^{ij}_t)_{k-1-t} \delta^{(t)}(x-y).
\end{equation}
In this way the components of $\omega^{ij}$ of a bivector decompose in homogeneous terms $[\omega^{ij}]_k$ of the form 
\begin{equation}\label{eq31bis.eq}
[\omega^{ij}]_k:= \sum_{t=0}^{k-1} (A^{ij}_t)_{k-1-t} \delta^{(t)}(x-y).
\end{equation}
Therefore we have an induced decomposition of $\Lambda^2_{\text{loc}}$ as follows: 
$$\Lambda^2_{\text{loc}}=\bigoplus_{k\geq 1} \Lambda^{2}_{k, \text{loc}},$$
where $\omega$ is in $\Lambda^{2}_{k, \text{loc}}$ exactly when its components $\omega^{ij}$ are of the form
of the addends in \eqref{eq31.eq}. 

Notice that the bivectors of hydrodynamic type $\omega$ introduced by Dubrovin and Novikov (\eqref{eq7.eq}) are elements of $\Lambda^{2}_{2, \text{loc}}$, and indeed any element of $\Lambda^{2}_{2, \text{loc}}$ is a bivector of hydrodynamic type. 
Following \cite{DN84}, we call any Poisson structure of the form \begin{equation}\label{eq32.eq}
(\omega+P)\in \Lambda^2_{\text{loc}},\end{equation} where 
$P=\sum_{k\geq 1} P_k$, $P_k\in \Lambda^2_{k+2,\text{loc}}$ a deformation of $\omega$. Notice that due to the rescaling $\psi_{\epsilon}$ the deformation \eqref{eq32.eq} transforms to $\epsilon^2(\omega+\sum_{k\geq 1} \epsilon^k P_k)$, so we can re-write the deformation as 
$$\omega+\sum_{k\geq 1} \epsilon^k P_k.$$
As we did for bivectors, the space of $j$-multivectors $\Lambda^{j}_{\text{loc}}$ can be decomposed in terms that are homogenous under rescaling: 
$$\Lambda^{j}_{\text{loc}}=\oplus_k \Lambda^j_{k,\text{loc}}.$$
Any element $P\in \Lambda^j_{k,\text{loc}}$ is transformed to $\epsilon^k P$ under rescaling. 

\subsection{Poisson cohomology}
Fix $\omega\in \Lambda^{2}_{2, \text{loc}}$ and assume that $\omega$ is Poisson.  Consider the map 
$$d_{\omega}: \Lambda^j_{\text{loc}}\rightarrow \Lambda^{j+1}_{\text{loc}}, \quad d_{\omega}(a)=[\omega, a].$$
In particular, it is immediate to see that the map $d_{\omega}$ maps $\Lambda^{j}_{k, \text{loc}}$ to $\Lambda^{j+1}_{k+2, \text{loc}}$.  
This map has the property that $d^2_{\omega}=0$ identically, due to the graded Jacobi identity satisfied by the Schouten-Nijenhuis bracket and the fact that $\omega$ is Poisson. 
This enables one to define cohomology groups, known as Poisson cohomology groups in the following way: 
\begin{equation}\label{eq33bis.eq}
H^j(\mathcal{L}(\mathbb{R}^n), \omega):=\frac{\ker\{d_{\omega}: \Lambda^j_{\text{loc}}\rightarrow \Lambda^{j+1}_{\text{loc}}\}}
{\mathrm{im}\{d_{\omega}: \Lambda^{j-1}_{\text{loc}} \rightarrow \Lambda^j_{\text{loc}}\}}
\end{equation}
These cohomology groups are completely analogous to those defined in the case of finite dimensional Poisson manifolds by 
Lichnerowicz \cite{lichn}. 

Due to the fact that the space of $j$-multivectors $\Lambda^{j}$ has a natural decomposition in terms of components homogenous under rescaling and the fact that $d_{\omega}$ preserves this homogenous decomposition, each cohomology group inherits a natural decomposition in homogeneous parts.
Indeed, we can introduce
\begin{equation}\label{eq33.eq}
H^j_k(\mathcal{L}(\mathbb{R}^n), \omega):=\frac{\ker\{d_{\omega}: \Lambda^j_{k,\text{loc}}\rightarrow \Lambda^{j+1}_{k+2,\text{loc}}\}}
{\mathrm{im}\{d_{\omega}: \Lambda^{j-1}_{k-2,\text{loc}} \rightarrow \Lambda^j_{k,\text{loc}}\}}
\end{equation} 
where a class $[\alpha]$ is in $H^j_k(\mathcal{L}(\mathbb{R}^n), \omega)$ exactly when any of its representatives can be chosen in $\Lambda^j_{k,\text{loc}}$. 
Naturally, one has 
\begin{equation}\label{eq34.eq}
H^j(\mathcal{L}(\mathbb{R}^n), \omega)=\oplus_{k} H^j_k(\mathcal{L}(\mathbb{R}^n), \omega).
\end{equation}
Let us remark that a decomposition like \eqref{eq34.eq} is typical of the infinite dimensional situation, and it does not have an analogous correspondence in the finite dimensional case. 

For Poisson structures of hydrodynamic type like \eqref{eq7.eq}, it has been proved in \cite{G} (see also \cite{DMS} for an independent proof
 of the cases $n=1,2$) that $H^k(\mathcal{L}(\mathbb{R}^n), \omega)=0$ for $k=1,2,\dots$. 
  
 The vanishing of these cohomology groups implies that any deformation of $\omega$ of the form 
\begin{equation}\label{DPB}
P^{\epsilon} =\omega +\sum_{n=1}^{\infty}\epsilon^{n}P_{n},
\end{equation}
where $P_k\in \Lambda^2_{k+2, \text{loc}}$ can be obtained from $\omega$ by performing a Miura transformation. 
 Indeed, from the Poisson condition
$$[P^{\epsilon},P^{\epsilon}]=0$$
it follows that $P_1$ is a cocycle of $\omega$ and therefore a coboundary
$$P_1={\rm Lie}_{X_1}\omega,$$
for a suitable vector field $X_1$. This means that, performing the Miura transformation generated by the vector field $-X_1$, we can eliminate the term $\epsilon$ and obtain a local Poisson bivector of the form
$$\tilde{P}=\omega +\sum_{n=2}^{\infty}\epsilon^{n}\tilde{P}_{n}.$$
Using the same arguments we can show that 
$$\tilde{P}_2={\rm Lie}_{X_2}\omega$$ 
and therefore it can be eliminated by the Miura transformation  generated by the vector field $-X_2$. In this way, step by step, we reduce $P$ to $\omega$. The reducing Miura transformation is the composition of the infinite sequence of Miura transformations generated by $-X_1,-X_2,\dots$.

Totally different is the case in which we deform  a pencil of local  bivectors.
 Without loss of generality we can  assume such a pencil of the form
$$P^{\epsilon}_{\lambda}
=\omega_2-\lambda\omega_1+\sum_{n=1}^{\infty}\epsilon^{n}P_{n},$$
where $\omega_1$ and $\omega_2$ are a pair of compatible bivectors of hydrodynamic type:
\begin{equation}\label{eq35.eq}
\omega_{a}=g^{ij}_a\delta'(x-y)+\Gamma^{ij}_{k,a}u^k_x\delta(x-y), \quad a=1,2,
\end{equation}
 Indeed, due to the triviality of $H^2(\mathcal{L}(\mathbb{R}^n), \omega)$, the  $\epsilon$-corrections to $\omega_1$ can be eliminated by a Miura transformation. However, the requirement that $P^{\epsilon}_{\lambda}$ is a Poisson bivector, namely
$$[P^{\epsilon}_{\lambda},P^{\epsilon}_{\lambda}]=0,$$
for any $\lambda\in \mathbb{R}$
imposes some restrictions on the bivector
$P_2^{\epsilon}=\omega_2+\sum_{n=1}^{\infty}\epsilon^{n}P_{n}$

\noindent
First:  it must be compatible with $\omega_1$, that is to say, indicating with $d_1(\cdot):=[\omega_1, \cdot]$
\begin{eqnarray}
\label{rr}d_1 P_{n}&=&0,
\end{eqnarray}
which means that all the terms $P_n$ must be coboundary of $\omega_1$.

\noindent
Second: it must be a Poisson bivector. This is equivalent to the system of conditions
\begin{eqnarray}\label{rec2}
d_2 P_{n}&=&-\frac{1}{2}\sum_{k=1}^{n-1}[P_{k},P_{n-k}],\,\,\,n=1,2,\dots
\end{eqnarray}
\begin{proposition}
The system \eqref{rec2} is compatible.
\end{proposition}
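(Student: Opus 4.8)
\emph{Proof plan.} The plan is to read ``compatible'' as ``recursively solvable'': assuming that $P_1,\dots,P_{n-1}$ have already been constructed so that \eqref{rec2} holds in every order $m<n$, I would show that the equation for $P_n$ can in turn be solved. Writing $Q_n:=-\frac{1}{2}\sum_{k=1}^{n-1}[P_k,P_{n-k}]$ for the right-hand side at order $n$, the obstruction to solving $d_2P_n=Q_n$ is the class of $Q_n$ in $H^3(\mathcal{L}(\mathbb{R}^n),\omega_2)$. Since $Q_n\in\Lambda^3_{n+4,\text{loc}}$ and, by the vanishing theorem of \cite{G}, $H^3(\mathcal{L}(\mathbb{R}^n),\omega_2)=0$ (recall that $\omega_2$ is itself a Poisson structure of hydrodynamic type), it suffices to prove that $Q_n$ is a $d_2$-\emph{cocycle}, i.e. $d_2Q_n=0$; a cocycle is then automatically a coboundary and the desired $P_n$ exists. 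This is the standard Maurer--Cartan style argument, so the whole content is concentrated in the identity $d_2Q_n=0$; the base case $n=1$ is just the trivial closedness condition $d_2P_1=0$.

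To establish it I would compute $d_2Q_n=-\frac{1}{2}\sum_{k=1}^{n-1}d_2[P_k,P_{n-k}]$ and eliminate the inner brackets by means of the graded Jacobi identity \eqref{eq23.eq}. Applying \eqref{eq23.eq} together with the skew-symmetry \eqref{eq24.eq} to the three bivectors $\omega_2,P_k,P_{n-k}$, all of which lie in $\Lambda^2_{\text{loc}}$ so that every sign $(-1)^{ki},(-1)^{kl},(-1)^{li}$ equals $+1$, gives
\begin{equation*}
d_2[P_k,P_{n-k}]=-[d_2P_k,P_{n-k}]-[d_2P_{n-k},P_k].
\end{equation*}
Substituting the lower-order relations $d_2P_k=-\frac{1}{2}\sum_{j}[P_j,P_{k-j}]$, available by the induction hypothesis since $k,n-k<n$, turns $d_2Q_n$ into a constant multiple of double sums of the triple brackets $[[P_a,P_b],P_c]$ with $a+b+c=n$. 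A short reindexing shows that each of the two double sums equals $S:=\sum_{a+b+c=n,\ a,b,c\ge 1}[[P_a,P_b],P_c]$, whence $d_2Q_n=-\frac{1}{2}S$.

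The final step is to show $S=0$. Because the Schouten bracket of two bivectors is symmetric, $[P_a,P_b]=[P_b,P_a]$, the summand $[[P_a,P_b],P_c]$ is invariant under $a\leftrightarrow b$; grouping the ordered triples into orbits of the cyclic rotation $(a,b,c)\mapsto(b,c,a)$, the sum over each orbit is exactly the cyclic Jacobi combination $[[P_a,P_b],P_c]+[[P_b,P_c],P_a]+[[P_c,P_a],P_b]$, which vanishes by \eqref{eq23.eq} (the degenerate orbit $a=b=c=n/3$, possible only when $3\mid n$, contributes $3[[P_a,P_a],P_a]=0$). Hence $S=0$, so $d_2Q_n=0$ and the induction closes. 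The only delicate point I anticipate is the bookkeeping: keeping the Schouten-bracket signs straight, here rendered harmless by the fact that all factors are bivectors, and checking that the reindexing genuinely sweeps out every ordered triple with $a+b+c=n$ exactly once. The conceptual input, namely the triviality of $H^3$, is borrowed wholesale from \cite{G}.
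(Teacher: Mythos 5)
Your proposal is correct and follows essentially the same route as the paper's own proof: both reduce compatibility to showing the right-hand sides are $d_2$-cocycles (using Getzler's vanishing of $H^3(\mathcal{L}(\mathbb{R}^n),\omega_2)$), then argue by induction, moving $d_2$ inside the brackets via the graded Jacobi identity, substituting the lower-order relations, and grouping the resulting triple brackets $[[P_a,P_b],P_c]$ with $a+b+c=n$ into cyclic orbits that vanish by the Jacobi identity. Your version is somewhat more explicit about the sign bookkeeping and the reindexing, but the ideas and their order are identical.
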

\proof Indeed due to the vanishing of  $H^3(\mathcal{L}(\mathbb{R}^n),\omega_2)$
  compatibility is equivalent to the requirement that the right-hand sides are cocycles of $\omega_2$:
\begin{equation*}
d_2\left(\sum_{k=1}^{n-1}[P_{k},P_{n-k}]\right)=0,\,\,\,n=1,2,\dots.
\end{equation*}
This can be proved by induction using the graded Jacobi identity
\begin{eqnarray*}
&&d_2\left(\sum_{k=1}^{n-1}[P_{k},P_{n-k}]\right)=\\
&&\sum_{k=1}^{n-1}[\omega_2,[P_{k},P_{n-k}]]=\\
&&-\sum_{k=1}^{n-1}[P_{k},[\omega_2,P_{n-k}]]-\sum_{k=1}^{n-1}[P_{n-k},[P_{k},\omega_2]]=\\
&&-2\sum_{k=1}^{n-1}[P_{k},[\omega_2,P_{n-k}]]=\\
&&-2\sum_{k=1}^{n-1}\sum_{l=1}^{n-k-1}[P_{k},[P_{n-k-l},P_{l}]]=\\
&&-2\sum_{k+l+m=n,1\le k\le l\le m}\left([P_{k},[P_{l},P_{m}]]+[P_{m},[P_{k},P_{l}]]+[P_{l},[P_{m},P_{k}]]\right)=0.
\end{eqnarray*}
This means that  we can solve recursively the equations \eqref{rec2} and the compatibility is proved. 
\endproof
 At each step the solution of \eqref{rec2} is defined up to a coboundary
 of $\omega_2$. The problem is to prove that it is always possible to choose these coboundaries in such a way that the resulting
 Poisson bivector is compatible with $\omega_1$. In other words the problem is to prove that any solution $P_n$ of \eqref{rec2}
 has the form 
\begin{equation}\label{bihc}
P_n=d_1  X_n+d_2 Y_n
\end{equation}
if  $P_1,\dots,P_{n-2},P_{n-1}$  are coboundaries of $\omega_1$. This is a non trivial open problem.  Notice that 
 if  $P_1,\dots,P_{n-2},P_{n-1}$  are coboundaries of $\omega_1$, then the bivectors $P_n$ defined by \eqref{rec2} satisfy the condition
\begin{equation}\label{d1d2}
d_1 d_2 P_n=0
\end{equation}
Indeed
\begin{eqnarray*}
&&d_1\left(\sum_{k=1}^{2n-1}[P_{2k},P_{2n-2k}]\right)=\\
&&\sum_{k=1}^{2n-1}[\omega_1,[P_{2k},P_{2n-2k}]]=\\
&&-\sum_{k=1}^{2n-1}[P_{2k},[\omega_1,P_{2n-2k}]]-
\sum_{k=1}^{2n-1}[P_{2n-2k},[P_{2k},\omega_1]]=0
\end{eqnarray*}
Unfortunately this is  not sufficient to conclude that $P_n$ has the form \eqref{bihc}. The possible obstruction lives in the bi-Hamiltonian  cohomology group
$$H^3(\mathcal{L}(\mathbb{R}^n),\omega_1,\omega_2)=\f{{\rm Ker}\left(d_1 d_2\,|_{\Lambda_{\text{loc}}^2}\right)}{{\rm Im}\left(d_1|_{\Lambda_{\text{loc}}^1}\right)\oplus{\rm Im}\left(d_1|_{\Lambda_{\text{loc}}^1}\right)}.$$
To the best of our knowledge, bi-Hamiltonian cohomology groups were initially introduced and studied in \cite{GZ}. However bi-Hamiltonian cohomology groups in the framework of integrable PDEs were first used in \cite{DZ1}.
\newline
Since the deformations $P_{\epsilon}=\sum_{k=1}^{\infty}\epsilon^k P_k$ are coboundaries, namely $P_k={\rm Lie}_{X_k}\omega_1,$
in order to construct explicitly the components $P_k$ it is convenient to solve the equations for the vector fields $X_k$ generating the deformation, instead of solving
 the corresponding equations for the bivectors $P_k$ that, in general, are more involved.
  
Let us consider, for instance, first order defomations, that is  $P^{\epsilon}_{\lambda}:=\omega_2+\epsilon {\rm Lie}_{X_1}\omega_1 -\lambda \omega_1.$
Since we want $P^{\epsilon}_{\lambda}$ to be Poisson up to the order $\epsilon$ included, we require $[P^{\epsilon}_{\lambda}, P^{\epsilon}_{\lambda}]=o(\epsilon)$. This implies 
\begin{equation} 
[\omega_2,P_1]=d_{2}P_1=d_{2}d_{1}X_1
=-d_{1}d_{2}X_1=0 
\end{equation}

\noindent
where we have used the fact that $(d_{1}+d_{2})^{2}=0$. 

\noindent
Among all vector fields that satisfy the equation 
$d_{1}d_{2}X_1=0$ we have to single out those defining trivial deformations, that is those generating 
deformations $P_1$ that can be obtained by infinitesimal change of 
coordinates: 

\begin{eqnarray}
&&{\rm Lie}_{\tilde{X}}\omega_{1}=0\\
&&{\rm Lie}_{\tilde{X}}\omega_{2}=P_1
\end{eqnarray}      
Notice that the vector field $\tilde{X}$ does not coincide with the the vector field $X_{1}$ defining the deformation.

\begin{theorem} 
\label{trivial}
Non trivial first order deformations are the elements of the group  
$$H^2_{2}(\mathcal{L}(\mathbb{R}^n),\omega_1,\omega_2)=\f{{\rm Ker}\left(d_1 d_2\,|_{\Lambda^1_{2,\text{loc}}}\right)}{{\rm Im}\left(d_1|_{\Lambda^0_{0,\text{loc}}}\right)\oplus{\rm Im}\left(d_1|_{\Lambda^0_{0,\text{loc}}}\right)}.$$
\end{theorem}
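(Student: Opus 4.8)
The plan is to recast the statement as a purely cohomological identification, using only the algebraic properties of the differentials $d_1=[\omega_1,\cdot]$ and $d_2=[\omega_2,\cdot]$ together with the vanishing $H^1(\mathcal{L}(\mathbb{R}^n),\omega_1)=H^2(\mathcal{L}(\mathbb{R}^n),\omega_1)=0$ for Poisson structures of hydrodynamic type (proved in \cite{G}). First I would pin down the shape of a first order deformation. Compatibility with $\omega_1$, equation \eqref{rr}, forces $P_1$ to be a $d_1$-cocycle, $d_1P_1=0$, and since $H^2(\mathcal{L}(\mathbb{R}^n),\omega_1)=0$ this cocycle is a coboundary: $P_1=d_1X_1=\mathrm{Lie}_{X_1}\omega_1$ for some local vector field $X_1\in\Lambda^1_{2,\text{loc}}$. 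The remaining first order Poisson condition $[P^{\epsilon}_{\lambda},P^{\epsilon}_{\lambda}]=o(\epsilon)$ reduces, exactly as in the displayed computation preceding the statement, to $d_2P_1=d_2d_1X_1=-d_1d_2X_1=0$, where I use $(d_1+d_2)^2=0$ together with $d_1^2=d_2^2=0$. Hence the admissible generators are precisely $X_1\in\mathrm{Ker}(d_1d_2|_{\Lambda^1_{2,\text{loc}}})$, which is the numerator of the claimed group.

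Second, I would account for the redundancy in representing a deformation $P_1$ by a generator $X_1$. Two vector fields $X_1,X_1'$ produce the same $P_1=d_1X_1=d_1X_1'$ if and only if $X_1-X_1'\in\mathrm{Ker}(d_1|_{\Lambda^1})$; invoking $H^1(\mathcal{L}(\mathbb{R}^n),\omega_1)=0$ once more, this kernel equals $\mathrm{Im}(d_1|_{\Lambda^0_{0,\text{loc}}})$, the space of $\omega_1$-Hamiltonian vector fields. Thus quotienting the cocycles by $\mathrm{Im}(d_1|_{\Lambda^0_{0,\text{loc}}})$ is what makes the correspondence $X_1\leftrightarrow P_1$ well defined, and it supplies one of the two summands in the denominator.

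Third — and this is where the second summand $\mathrm{Im}(d_2)$ comes from — I would characterize the trivial deformations. A deformation is trivial when it is induced by an infinitesimal Miura transformation, that is, when there is a vector field $\tilde X$ with $\mathrm{Lie}_{\tilde X}\omega_1=0$ and $\mathrm{Lie}_{\tilde X}\omega_2=P_1$, keeping in mind (as the text already stresses) that $\tilde X\neq X_1$ in general. From $d_1\tilde X=0$ and $H^1(\mathcal{L}(\mathbb{R}^n),\omega_1)=0$ I get $\tilde X=d_1h$ for some $h\in\Lambda^0_{0,\text{loc}}$; substituting into $d_2\tilde X=P_1=d_1X_1$ and using $d_2d_1=-d_1d_2$ gives $d_1(X_1+d_2h)=0$, whence $X_1+d_2h=d_1k$ for some $k$, i.e. $X_1=d_1k-d_2h\in\mathrm{Im}(d_1|_{\Lambda^0_{0,\text{loc}}})+\mathrm{Im}(d_2|_{\Lambda^0_{0,\text{loc}}})$. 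The converse is the same computation run backwards: given $X_1=d_1k-d_2h$, the choice $\tilde X=d_1h$ satisfies $d_1\tilde X=0$ and $d_2\tilde X=-d_1d_2h=d_1X_1=P_1$, so $P_1$ is trivial. Therefore the trivial generators are exactly $\mathrm{Im}(d_1|_{\Lambda^0_{0,\text{loc}}})\oplus\mathrm{Im}(d_2|_{\Lambda^0_{0,\text{loc}}})$.

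Finally I would assemble the three steps: non trivial first order deformations are the admissible generators $\mathrm{Ker}(d_1d_2|_{\Lambda^1_{2,\text{loc}}})$ taken modulo both the representation redundancy and the trivial directions, and both of these are absorbed into the single quotient by $\mathrm{Im}(d_1|_{\Lambda^0_{0,\text{loc}}})\oplus\mathrm{Im}(d_2|_{\Lambda^0_{0,\text{loc}}})$, yielding exactly $H^2_2(\mathcal{L}(\mathbb{R}^n),\omega_1,\omega_2)$. The one genuine input, as opposed to bookkeeping, is the cohomology vanishing $H^1=H^2=0$ of \cite{G}, which is what licenses every passage from a cocycle to a coboundary; the main delicacy in the argument itself is the third step, where one must not conflate the Miura generator $\tilde X$ with the deformation generator $X_1$, and must check that genuinely both $\mathrm{Im}(d_1)$ and $\mathrm{Im}(d_2)$ — not merely one of them — exhaust the trivial directions.
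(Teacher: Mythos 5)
Your proof is correct and follows essentially the same route as the paper: the core of both arguments is the triviality characterization, in which the Miura generator is written as $\tilde X = d_1 h$ via $H^1(\mathcal{L}(\mathbb{R}^n),\omega_1)=0$ and the anticommutation $d_1 d_2 = -d_2 d_1$ is used in both directions, exactly as in the paper's proof (your explicit converse choice $\tilde X = d_1 h$ matches the paper's $\tilde X = -d_1 b$ up to sign conventions). Your preliminary steps — identifying the admissible generators with $\mathrm{Ker}(d_1 d_2|_{\Lambda^1_{2,\mathrm{loc}}})$ and observing that the quotient by $\mathrm{Im}(d_1|_{\Lambda^0_{0,\mathrm{loc}}})$ also absorbs the non-uniqueness of the generator $X_1$ — only make explicit what the paper treats in the discussion preceding the theorem, and you correctly read the denominator (misprinted in the paper with two copies of $d_1$) as $\mathrm{Im}(d_1)\oplus\mathrm{Im}(d_2)$.
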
 

\begin{proof}
Suppose that $X_{1}=d_{1}a+d_{2}b$ then
$$P_1=d_{1}d_{2}b=-d_{2}d_{1}b$$
This means $P_{1}=Lie_{\tilde{X}}\omega_{2}$ with
$\tilde{X}=-d_{1}b$.
Moreover 
$$Lie_{\tilde{X}}\omega_{1}=0$$
and therefore the deformation is trivial.

\noindent
Assume now that the deformation is trivial. Then, by definition, we have
$$Lie_{\tilde{X}}\omega_{1}=0,\qquad Lie_{\tilde{X}}\omega_{2}=P_1$$
which implies, due to the vanishing of the first cohomology group,
$$\tilde{X}=d_{1}b.$$
The above condition entails
$$-d_{1}d_{2}b=d_{2}d_{1}b=Lie_{\tilde{X}}\omega_2=
P_1=d_{1}Y_1$$ 
with $Y_1=-d_{2}b+d_{1}a$.
\end{proof}

Higher order deformations can be treated in a similar way. It turns out that non trivial deformations
 are related to the cohomology groups
$$H^2_{k}(\mathcal{L}(\mathbb{R}^n),\omega_1,\omega_2)=\f{{\rm Ker}\left(d_1 d_2\,|_{\Lambda^1_{k,\text{loc}}}\right)}{{\rm Im}\left(d_1|_{\Lambda^0_{k-2,\text{loc}}}\right)\oplus{\rm Im}\left(d_1|_{\Lambda^0_{k-2,\text{loc}}}\right)}.$$ 
The study of such cohomology groups has been done by Liu and Zhang in \cite{LZ} in the semisimple case, that is assuming that the eigenvalues $u^1,\dots,u^n$ of the matrix $g_1^{-1}g_2$ define a set of local coordinates, called
 \emph{canonical coordinates} (here $g_1$ and $g_2$ are the contravariant metrics defining the two undeformed Poisson structures of hydrodynamic type $\omega_1$ and $\omega_2$). Under this additional assumption they showed that 
$$H^2_{k}(\mathcal{L}(\mathbb{R}^n),\omega_1,\omega_2)=0
\quad \forall k\ne 3$$ 
and that the elements of
$$H^2_{3}(\mathcal{L}(\mathbb{R}^n),\omega_1,\omega_2)$$
are vector fields of the form
\begin{equation}\label{qtvf}
d_2\left(\sum_{i=1}^n\int c^i(u^i)u^i_x{\rm log}u^i_x\,dx\right)-d_1\left(\sum_{i=1}^n\int u^i c^i(u^i)u^i_x{\rm log}u^i_x\,dx\right)
\end{equation}
where $c^i(u^i)$ are arbitrary functions of a single variable. 
Notice that the functionals in the brackets in the formula (\ref{qtvf}) do not belong to $\Lambda^0_{1,{\rm loc}}$
 due to the non-polynomial dependence on the $x$-derivatives of the $u$'s. If we allow such a dependence
 all the elements in $H^2_{3}(\mathcal{L}(\mathbb{R}^n),\omega_1,\omega_2)$ become ``trivial".
 This remark justifies the following definitions \cite{DZ1}.

\begin{definition} 
The transformations of the form
$$u^i \to v^i = u^i +\sum_{k=1}^{\infty}
\epsilon^k F^i_k(u; u_x, . . . , u_{(n_k)}),\qquad i = 1, . . . , n$$ 
where the coefficients $F^i_k$ are quasihomogeneous of the degree $k$ rational functions in the
derivatives $u_x, . . . , u_{(n_k)}$ are called quasi-Miura transformations. 
\end{definition}

\begin{definition}
A deformation
 of a Poisson pencil of hydrodynamic type 
 is called quasitrivial if there exists a quasi-Miura transformation
reducing the pencil to its leading term.
\end{definition}
Clearly the transformation generated by the vector field 
$$d_1\left(-\sum_{i=1}^n\int c^i(u^i)u^i_x{\rm log}u^i_x\,dx\right)$$
is quasitrivial. In other words, in the semisimple case, all second order deformations are quasitrivial.
 In \cite{LZ2} Liu and Zhang proved that, in the scalar case, the deformations (if they exist) 
are quasitrivial at any order (an alternative independent proof was given later in \cite{B}). We will use this fact in subsequent section. The quasi-triviality of deformations of semisimple bi-Hamiltonian structures of hydrodynamic type was instead proved in \cite{DLZ}. 

\section{An alternative formalism}
We already pointed out that it is possible to compute the Schouten bracket of two bivectors using a different formalism, initially introduced by Dorfman, Gelfand \cite{GD}, Olver \cite{O} and further developed by Getzler \cite{G} and Barakat \cite{B}. On of the key advantages of this formalism is that it turns difficult and time consuming computations into extremely fast and straightforward calculations. 
Consider the graded algebra $\mathcal{A}:=\oplus_{k\in \mathbb{Z}} \mathcal{A}_{k}$ over the ring $C^{\mathbb{\infty}}(\mathbb{R}^n)$, where $\mathcal{A}=C^{\mathbb{\infty}}(\mathbb{R}^n)[u_{(1)}, u_{(2)}, \dots]$ is just the polynomial algebra with coefficients in $C^{\mathbb{\infty}}(\mathbb{R}^n)$ generated by countable generators $\{u_{(1)}, \dots, u_{(k)}, \dots\}$ with the grading induced by assigning $\mathrm{deg}(u^i_{(k)})=k$. 
On $\mathcal{A}$ it is defined a total derivative with respect to $x$: 
\begin{equation}\label{totalderivativex.eq}
\partial_x:=\sum_{i=1}^n \sum_{k=0}^{\infty}u^i_{(k+1)}\frac{\partial}{\partial u^i_{(k)}}
\end{equation}
and a variational derivative with respect to $u^i$ 
\begin{equation}\label{variationalu.eq}
\frac{\delta}{\delta u^i}:=\sum_{k\geq 0} (-\partial_x)^k \frac{\partial}{\partial u^i_{(k)}}
\end{equation}
Now instead of dealing with $\delta$-Dirac distributions and their derivatives, one introduces a polynomial algebra over anticommuting variables $\theta^i_{k}$ $i=1,\dots, n$, $k\in \mathbb{Z}_{\geq 0}$ that satisfies the following relations 
\begin{eqnarray}\label{theta.eq}
\theta^i_k\wedge \theta^j_l=- \theta^j_l \wedge \theta^i_k,\\
\partial_x \theta^i_k=\theta^i_{k+1},\\
\end{eqnarray}
where $\partial_x$ behaves like a derivation: $$\partial_x (\theta^i_k\wedge \theta^j_l)=\theta^i_{k+1}\wedge \theta^j_l+\theta^i_{k}\wedge \theta^j_{l+1}.$$
Formally, it is possible to express $\partial_x$ as a combination of partial derivatives with respect to $\theta^i_k$: 
\begin{equation}\label{partialtheta.eq}
\partial_x=\sum_{i=1}^n \sum_{k=0}^{\infty} \theta^i_{k+1}\frac{\partial}{\partial \theta^i_k}.
\end{equation}
To apply correctly formula \eqref{partialtheta.eq} obtaining results consistent with $\partial_x (\theta^i_k\wedge \theta^j_l)=\theta^i_{k+1}\wedge \theta^j_l+\theta^i_{k}\wedge \theta^j_{l+1}$, it is important to underline the following: when the operator $\frac{\partial}{\partial \theta^i_k}$ acts on an expression of the form $\theta^j_l\wedge \theta^i_k$ we need to bring the term $\theta^i_k$ in front using the anti-commutation rule $\theta^j_l\wedge \theta^i_k=-\theta^i_k\wedge\theta^j_l$ and only after that  apply the operator  $\frac{\partial}{\partial \theta^i_k}$. For instance, suppose that in the scalar case we want to compute $\partial (\theta_1\wedge \theta_3)=\theta_2\wedge \theta_3+\theta_1\wedge \theta_4$, using the equation \eqref{partialtheta.eq}. 
We have 
$$\sum_{k=0}^{\infty}\theta_{k+1} \frac{\partial}{\partial \theta_k} (\theta_1\wedge \theta_3)=$$
$$\theta_{2}\frac{\partial}{\partial \theta_1} (\theta_1\wedge \theta_3)+\theta_{4} \frac{\partial}{\partial \theta_3} (\theta_1\wedge \theta_3)=$$
$$\theta_{2}\wedge \theta_3-\theta_4\frac{\partial}{\partial \theta_3}(\theta_3\wedge \theta_1)=$$
$$\theta_{2}\wedge \theta_3 - \theta_4\wedge \theta_1=\theta_{2}\wedge \theta_3+\theta_1\wedge \theta_4.$$
From now on, when the operator $\frac{\partial}{\partial \theta^i_k}$ appears we will assume that this procedure has been enforced. 
In this way the total derivative $\partial_x$ is extended to $\mathcal{A}(\Theta):=\mathcal{A}[\theta^i_k, i=1, \dots, n, k\geq0]$:
\begin{equation}\label{partialextended.eq}
\partial_x=\sum_{i=1}^n \sum_{k=0}^{\infty}\left[ u^i_{(k+1)}\frac{\partial}{\partial u^i_{(k)}}+\theta^i_{k+1}\frac{\partial}{\partial \theta^i_k}\right]
\end{equation}
Analogously, we can consider the variational derivative with respect to $\theta^i$ as given by the following formula: 
\begin{equation}\label{variationaltheta.eq}
\frac{\delta}{\delta \theta^i}:=\sum_{k=0}^{\infty} (-\partial)^k \frac{\partial}{\partial \theta^i_{k}}.
\end{equation}
For instance, 
$$\frac{\delta}{\delta \theta^i}(\theta^i \wedge \theta^i_1)= \frac{\partial}{\partial \theta^i}(\theta^i \wedge \theta^i_1)-\partial_x(\frac{\partial}{\partial \theta^i_1}(\theta^i \wedge \theta^i_1))=$$
$$ \theta^i_1+\partial_x(\frac{\partial}{\partial \theta^i_1}(\theta^i_1 \wedge \theta^i))=2\theta^i_1.$$

As in the classical case, we have the following important lemma:
\begin{lemma}\label{variationofatotal}
The following identities hold: 
\begin{equation}\label{variationofatotal.eq}\frac{\delta}{\delta u^i}\partial_x=0, \quad \frac{\delta}{\delta \theta^i}\partial_x =0.\end{equation}
\end{lemma}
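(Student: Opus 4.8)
The plan is to reduce both identities to a single commutation relation between $\partial_x$ and the partial derivatives $\partial/\partial u^i_{(k)}$ (respectively $\partial/\partial\theta^i_k$), and then feed that relation into the definition of the variational derivative so that the whole sum telescopes. Since every element of $\mathcal{A}(\Theta)$ is a differential polynomial involving only finitely many generators, all the sums below act as finite sums on any fixed argument, so there is no convergence issue and the manipulations are purely formal.

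First I would compute, for the even variables, the commutation relation
\[
\frac{\partial}{\partial u^i_{(k)}}\,\partial_x=\partial_x\,\frac{\partial}{\partial u^i_{(k)}}+\frac{\partial}{\partial u^i_{(k-1)}},
\]
with the convention that the last term is $0$ when $k=0$. This follows immediately from the explicit form $\partial_x=\sum_{j,m}u^j_{(m+1)}\partial/\partial u^j_{(m)}$: applying $\partial/\partial u^i_{(k)}$ by the Leibniz rule, the only factor that can be hit is the coefficient $u^j_{(m+1)}$, producing $\delta^{ij}\delta_{m+1,k}$, i.e.\ the shifted derivation $\partial/\partial u^i_{(k-1)}$, while the remaining piece reproduces $\partial_x\,\partial/\partial u^i_{(k)}$.

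Inserting this into $\frac{\delta}{\delta u^i}\partial_x=\sum_{k\ge0}(-\partial_x)^k\frac{\partial}{\partial u^i_{(k)}}\partial_x$ and using $(-\partial_x)^k\partial_x=-(-\partial_x)^{k+1}$ splits the expression into
\[
-\sum_{k\ge0}(-\partial_x)^{k+1}\frac{\partial}{\partial u^i_{(k)}}+\sum_{k\ge1}(-\partial_x)^{k}\frac{\partial}{\partial u^i_{(k-1)}}.
\]
Reindexing the second sum by $j=k-1$ turns it into $\sum_{j\ge0}(-\partial_x)^{j+1}\partial/\partial u^i_{(j)}$, which is exactly the negative of the first sum; the two cancel and $\frac{\delta}{\delta u^i}\partial_x=0$ follows.

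For the odd variables the only thing left to check is that the identical shift relation $\frac{\partial}{\partial\theta^i_k}\partial_x=\partial_x\frac{\partial}{\partial\theta^i_k}+\frac{\partial}{\partial\theta^i_{k-1}}$ holds; granting this, the telescoping argument of the previous paragraph applies verbatim. I would verify it directly from $\partial_x=\sum_{j,m}\theta^j_{m+1}\partial/\partial\theta^j_m$: pushing the odd derivation $\partial/\partial\theta^i_k$ through the factor $\theta^j_{m+1}$ yields the boundary term $\delta^{ij}\delta_{k,m+1}\,\partial/\partial\theta^j_m=\partial/\partial\theta^i_{k-1}$, while the leftover term, after using the anticommutativity $\frac{\partial}{\partial\theta^i_k}\frac{\partial}{\partial\theta^j_m}=-\frac{\partial}{\partial\theta^j_m}\frac{\partial}{\partial\theta^i_k}$, reproduces $\partial_x\,\partial/\partial\theta^i_k$ with the right sign. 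The only real subtlety in this otherwise routine argument is precisely this bookkeeping of Koszul signs, since a single stray sign would spoil the cancellation; once it is confirmed that the two Grassmann sign flips conspire to give the same shift relation as in the commuting case, both identities in \eqref{variationofatotal.eq} follow from the same telescoping.
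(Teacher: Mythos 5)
Your proof is correct. It is also genuinely different from what the paper does: the paper's entire proof is a one-line citation (``the same proof of Olver, Theorem 4.7 applies to this case''), and the argument being cited is the variational one --- a total $x$-derivative contributes only boundary terms to a local functional, so its first variation against compactly supported variations vanishes, whence its Euler operator is zero. Your argument is instead purely algebraic: you establish the shift relation $\frac{\partial}{\partial u^i_{(k)}}\partial_x=\partial_x\frac{\partial}{\partial u^i_{(k)}}+\frac{\partial}{\partial u^i_{(k-1)}}$ directly from the explicit form of $\partial_x$, feed it into the definition $\frac{\delta}{\delta u^i}=\sum_{k\ge 0}(-\partial_x)^k\frac{\partial}{\partial u^i_{(k)}}$, and let the two resulting sums cancel by reindexing; then you check that the identical relation survives for the odd variables once the two Grassmann sign flips (the Leibniz sign of the odd derivation through the odd coefficient $\theta^j_{m+1}$, and the anticommutation of the odd derivations) are tracked, which is indeed the only delicate point. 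This buys two things the citation does not: the argument is self-contained and entirely formal, requiring no interpretation of integrals or test functions --- exactly the setting of the formal calculus of variations in which the paper works --- and it actually verifies the anticommuting case, whereas Olver's theorem is stated for commuting variables only, so the paper implicitly leaves the sign bookkeeping for the $\theta$'s to the reader. The trade-off is only length: the paper's citation is economical for readers willing to accept that the graded extension is routine, while your computation makes that routine step explicit.
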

This lemma is a generalization of the well-known fact that to a Lagrangian function it is possible to add a closed form or a total derivative without affecting the equations of motion (Euler-Lagrange equations). 

\noindent\begin{proof}
The same proof of \cite{O2}, Theorem 4.7. applies to this case. 
\end{proof}

It is possible to introduce also higher-order variational derivatives with respect to $\theta^i_k$ and $u^i_{(k)}$: 
\begin{equation}\label{variationalhighertheta.eq}
\frac{\delta}{\delta \theta^i_k}:=\sum_{l=0}^{\infty} (-1)^l \binom{k+l}{k} \partial_x^l \frac{\partial}{\partial \theta^i_{k+l}},
\end{equation}
\begin{equation}\label{variationalhigheru.eq}
\frac{\delta}{\delta u^i_{(k)}}:=\sum_{l=0}^{\infty} (-1)^l \binom{k+l}{k} \partial_x^l \frac{\partial}{\partial u^i_{(k+l)}}
\end{equation}
The higher order variational derivatives are related to the ordinary partial derivatives through the following Lemma.
\begin{lemma}\label{ordinaryandvariational.th}
The following identities hold true: 
\begin{equation}\label{ordinaryvariationa1.eq}
\frac{\partial}{\partial \theta^i_k} =\sum_{j=k}^{\infty} \binom{j}{k} \partial^{j-k}_x \frac{\delta}{\delta \theta^i_j},
\end{equation}
where $ \frac{\delta}{\delta \theta^i_j}$ are given by \eqref{variationalhighertheta.eq} and 
\begin{equation}\label{ordinaryvariational2.eq}
\frac{\partial}{\partial u^i_{(k)}} =\sum_{j=k}^{\infty} \binom{j}{k} \partial^{j-k}_x \frac{\delta}{\delta u^i_{(j)}}, 
\end{equation}
where $\frac{\delta}{\delta u^i_{(j)}}$ are given by \eqref{variationalhigheru.eq}.
\end{lemma}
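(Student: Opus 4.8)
The plan is to prove both identities by direct substitution of the definitions \eqref{variationalhighertheta.eq} and \eqref{variationalhigheru.eq} into the claimed formulas, reducing each to a single classical binomial-coefficient identity. Since the Grassmann signs of the $\theta^i_k$ never enter the coefficient bookkeeping, I would treat the $u$-case \eqref{ordinaryvariational2.eq} in full and then observe that \eqref{ordinaryvariationa1.eq} follows by the verbatim computation.

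First I would insert the definition \eqref{variationalhigheru.eq} of $\frac{\delta}{\delta u^i_{(j)}}$ into the right-hand side of \eqref{ordinaryvariational2.eq}, producing the double operator series
\[
\sum_{j=k}^{\infty}\sum_{l=0}^{\infty}(-1)^l\binom{j}{k}\binom{j+l}{j}\,\partial_x^{\,j-k+l}\,\frac{\partial}{\partial u^i_{(j+l)}}.
\]
Setting $m=j+l$ and exchanging the order of summation, this rewrites as
\[
\sum_{m=k}^{\infty}\left[\sum_{j=k}^{m}(-1)^{m-j}\binom{j}{k}\binom{m}{j}\right]\partial_x^{\,m-k}\,\frac{\partial}{\partial u^i_{(m)}}.
\]
The exchange is legitimate because, applied to any fixed differential polynomial, only finitely many summands are nonzero, so there is no convergence issue.

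The crux is then to show that the bracketed coefficient equals the Kronecker delta $\delta_{m,k}$. For this I would use the subset-of-a-subset identity $\binom{j}{k}\binom{m}{j}=\binom{m}{k}\binom{m-k}{j-k}$ to pull $\binom{m}{k}$ out of the inner sum, and after the shift $p=j-k$ recognize the remainder as the alternating binomial sum $\sum_{p=0}^{m-k}(-1)^{p}\binom{m-k}{p}=(1-1)^{m-k}$, which vanishes for $m>k$ and equals $1$ for $m=k$. This collapses the outer sum to the single surviving term $\frac{\partial}{\partial u^i_{(k)}}$, which is exactly the left-hand side of \eqref{ordinaryvariational2.eq}.

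The argument for \eqref{ordinaryvariationa1.eq} is identical with $u^i_{(j)}$ replaced by $\theta^i_j$: because $\partial_x$ acts as an even derivation and the binomial coefficients are scalars, the parity of the $\theta$'s plays no role in the coefficient identity, and the same collapse to $\delta_{m,k}$ occurs. The only point needing a word of justification is the interchange of the two infinite summations, which I would dispose of by noting that both sides are operators on the graded algebra $\mathcal{A}(\Theta)$ and that on any homogeneous element of finite order all but finitely many terms vanish. There is thus no genuine analytic obstacle here, in keeping with the fact that these computational lemmas, though apparently absent from the literature, are elementary.
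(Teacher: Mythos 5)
Your proof is correct and follows essentially the same route as the paper's: substitute the definition of the higher variational derivative, reindex by $m=j+l$, exchange the order of summation, and collapse the inner sum to a Kronecker delta via the binomial orthogonality relation. The only differences are cosmetic: you work out the $u$-case rather than the $\theta$-case, you justify the interchange of summations explicitly, and you derive the key identity $\sum_{j=k}^{m}(-1)^{m-j}\binom{j}{k}\binom{m}{j}=\delta_{m,k}$ from trinomial revision plus the alternating binomial sum, whereas the paper simply quotes it.
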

\begin{proof}
See Appendix
\end{proof}

An important identity relating higher variational derivatives is the following one: 
\begin{lemma}
For any differential polynomials $f, g\in \mathcal{A}$, the following identity holds:
\begin{equation}\label{highervariationalidentity.eq}
\sum_{j\geq 0 } \partial^j_x(f)\frac{\partial g}{\partial u_{(j)}}=\sum_{j\geq 0}\partial^j_x\left(f\frac{\delta g}{\delta u_{(j)}}\right)
\end{equation}
\end{lemma}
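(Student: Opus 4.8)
The plan is to start from the right-hand side and reduce it to the left-hand side, the only real ingredients being the generalized Leibniz rule for $\partial_x$ together with the inversion formula of Lemma~\ref{ordinaryandvariational.th}. I work in the scalar case, as in the statement; the multicomponent case is identical with the index $i$ carried along. A preliminary remark makes everything rigorous: since $f$ and $g$ are \emph{differential polynomials}, the variational derivatives $\delta g/\delta u_{(m)}$ vanish once $m$ exceeds the top derivative order occurring in $g$, and likewise $\partial g/\partial u_{(j)}$ vanishes for large $j$. Hence every series below is in fact a finite sum, and all interchanges of summation order are legitimate finite rearrangements.

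First I would expand the right-hand side with the Leibniz rule $\partial_x^m(f\,h)=\sum_{j=0}^m \binom{m}{j}\partial_x^j(f)\,\partial_x^{m-j}(h)$ applied with $h=\delta g/\delta u_{(m)}$, giving
\begin{equation*}
\sum_{m\geq 0}\partial_x^m\!\left(f\,\frac{\delta g}{\delta u_{(m)}}\right)=\sum_{m\geq 0}\sum_{j=0}^{m}\binom{m}{j}\,\partial_x^j(f)\,\partial_x^{m-j}\frac{\delta g}{\delta u_{(m)}}.
\end{equation*}
Next I would interchange the two summations, collecting terms according to the fixed power $j$ of $\partial_x$ falling on $f$; for fixed $j$ the index $m$ then runs over $m\geq j$. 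This rewrites the double sum as
\begin{equation*}
\sum_{j\geq 0}\partial_x^j(f)\left(\sum_{m\geq j}\binom{m}{j}\,\partial_x^{m-j}\frac{\delta g}{\delta u_{(m)}}\right).
\end{equation*}
The inner sum is precisely the right-hand side of identity \eqref{ordinaryvariational2.eq} of Lemma~\ref{ordinaryandvariational.th} with $k=j$, so it equals $\partial g/\partial u_{(j)}$. Substituting this identification yields $\sum_{j\geq 0}\partial_x^j(f)\,\partial g/\partial u_{(j)}$, which is exactly the left-hand side, completing the proof.

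There is essentially no serious obstacle: the content is entirely the Leibniz expansion plus Lemma~\ref{ordinaryandvariational.th}, which is the inverse change of basis between ordinary and variational derivatives. The only point requiring a word of care is the justification of the reindexing of the double sum, and as noted this is harmless because for differential-polynomial arguments the sums truncate. If one instead wished to avoid invoking Lemma~\ref{ordinaryandvariational.th}, an alternative would be to substitute the definition \eqref{variationalhigheru.eq} of $\delta/\delta u_{(m)}$ directly and check that the coefficient of each $\partial_x^N \partial g/\partial u_{(k)}$ collapses via a Vandermonde-type binomial identity; but routing the computation through Lemma~\ref{ordinaryandvariational.th} is cleaner and is the path I would take.
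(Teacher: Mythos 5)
Your proof is correct, but it takes a genuinely different route from the paper, which in fact offers no proof at all: immediately after the statement, the authors simply remark that the identity ``is actually the definition of higher-variational derivatives as given in \cite{O2}'', so their justification is a citation identifying the paper's higher variational derivatives with Olver's higher Euler operators, plus the observation that the identity holds even for non-polynomial arguments. You instead derive the identity from the paper's own explicit formula \eqref{variationalhigheru.eq}, using only the Leibniz rule, a finite reindexing of a double sum, and the inversion formula \eqref{ordinaryvariational2.eq} of Lemma~\ref{ordinaryandvariational.th}: the inner sum $\sum_{m\geq j}\binom{m}{j}\,\partial_x^{m-j}\,\delta g/\delta u_{(m)}$ is exactly that operator identity applied to $g$, and there is no circularity, since Lemma~\ref{ordinaryandvariational.th} is proved in the Appendix directly from the definitions by a binomial computation. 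What your approach buys is self-containedness: it verifies that the explicit formula \eqref{variationalhigheru.eq} really satisfies the property that Olver takes as the \emph{definition}, a compatibility the paper leaves implicit. What the paper's approach buys is brevity and the stated extra generality; note, however, that your argument extends to that generality essentially unchanged, because the truncation justifying the finite rearrangements only requires that $g$ depend on finitely many jet variables, not that this dependence be polynomial.
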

This Lemma is actually the definition of higher-variational derivatives as given in \cite{O2}. This Lemma holds true in a more general situation, where $f,g$ are not required to depend polynomially on the derivatives of $u$.

As we will see, this formalism has several advantages. First we need to recall how it is related to the construction of $k$-multivectors and evolutionary vector fields 
introduced before.
Given a $k$-multivector $P$ written in the Dubrovin-Zhang formalism, to re-write it in this formalism it is sufficient to substitute each occurrence of $\delta^{(k)}$ with $\theta_{k}$ and finally multiply by $\theta$ on the left. For instance, $\omega_1=\delta'(x-y)$ is written as $\omega_1=\theta\wedge \theta_1$, while $\omega_2=u\delta'(x-y)+\frac{1}{2}u_{x}\delta(x-y)$ is written as $\omega_2=\theta\wedge(u\theta_{1}+\frac{1}{2}u_{x}\theta)=u\theta\wedge \theta_{1}$. The same procedure applies in particular to evolutionary vector fields. Since an evolutionary vector field is written as $X=f\frac{\partial}{\partial u} +(\partial_x f)\frac{\partial}{\partial u_x}+...$, $f\in  \mathcal{A}$ in the Dubrovin-Zhang formalism, in this formalism the same vector field appears as $X=f\theta$. In the case of systems, if we are given a Poisson tensor of hydrodynamic type as $P^{ij}=g^{ij}\delta'(x-y)+\Gamma^{ij}_ku^k_x\delta(x-y)$, we can write it in terms of the anti-commuting variables $\theta^i$ as $P^{ij}=g^{ij}\theta^i \theta^j_1+\Gamma^{ij}_ku^k_x\theta^i \theta^j=g^{ij}\theta^i \theta^j_1$, since sum over $i, j$ is assumed and $\Gamma^{ij}$ is symmetric, while $\theta^i \theta^j$ is skew.

Thus using this formalism a $k$-multivector  $P$ is represented as a sum of terms of the form $f \theta^{i_1}_{l_1}\wedge\dots \wedge \theta^{i_k}_{l_k}$, where $f\in \mathcal{A}$. 
In particular, the Schouten bracket between a $k$-multivector $P$ and a $k'$-multivector $Q$ is a $(k+k'-1)$-multivector given by the following expression
\begin{equation}\label{schouten2}
[P,Q]=\sum_{i=1}^N \frac{\delta P}{\delta \theta^i} \frac{\delta Q}{\delta u^i}-(-1)^{k+1}\frac{\delta P}{\delta u^i} \frac{\delta Q}{\delta \theta^i}
\end{equation}

From \eqref{schouten2} and \eqref{variationofatotal.eq} we get immediately the following lemma: 
\begin{lemma}
Let $P$ and $Q$ be a $k$-multivector and $k'$-multivector respectively. Then 
$$[P, \partial Q]=0.$$
\end{lemma}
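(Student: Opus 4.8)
The plan is to read the result off directly from the closed formula \eqref{schouten2} for the Schouten bracket, combined with the annihilation identities of Lemma \ref{variationofatotal}. First I would observe that $\partial Q$ is again a $k'$-multivector: by \eqref{partialextended.eq} the total derivative $\partial=\partial_x$ acts as a derivation on the $\theta$-monomials and hence preserves the multivector degree, so it is legitimate to substitute $\partial Q$ in place of $Q$ in \eqref{schouten2}. This substitution yields
$$[P,\partial Q]=\sum_{i=1}^N \frac{\delta P}{\delta \theta^i}\,\frac{\delta (\partial Q)}{\delta u^i}-(-1)^{k+1}\frac{\delta P}{\delta u^i}\,\frac{\delta (\partial Q)}{\delta \theta^i}.$$

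Next I would invoke the operator identities \eqref{variationofatotal.eq}, namely $\frac{\delta}{\delta u^i}\partial=0$ and $\frac{\delta}{\delta \theta^i}\partial=0$, which hold on all of $\mathcal{A}(\Theta)$. Applying them with argument $Q$ gives $\frac{\delta (\partial Q)}{\delta u^i}=0$ and $\frac{\delta (\partial Q)}{\delta \theta^i}=0$ for every index $i$. Consequently each factor multiplying a variational derivative of $P$ in the displayed expression vanishes, so both summands are identically zero and therefore $[P,\partial Q]=0$.

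There is essentially no genuine obstacle here: the whole content is the bookkeeping needed to recognize that the two factors in \eqref{schouten2} which act on the second argument are precisely the first variational derivatives $\frac{\delta}{\delta u^i}$ and $\frac{\delta}{\delta \theta^i}$ killed by Lemma \ref{variationofatotal}. The only point I would pause to verify is that this lemma genuinely covers the $\theta$-graded setting, i.e.\ that $\frac{\delta}{\delta \theta^i}\partial=0$ is valid for arguments carrying anticommuting factors and not merely for differential polynomials in $\mathcal{A}$; but this is exactly the second identity recorded in \eqref{variationofatotal.eq}, so no additional argument is required. I would also note that the statement is symmetric in spirit to the graded Leibniz property of $[\cdot,\cdot]$, but the direct computation above is the shortest route and avoids invoking the (more delicate) behaviour of the Schouten bracket under $\partial$ applied to the first slot.
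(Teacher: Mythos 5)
Your proof is correct and is precisely the argument the paper intends: the lemma is stated there as an immediate consequence of the bracket formula \eqref{schouten2} together with the identities $\frac{\delta}{\delta u^i}\partial_x=0$ and $\frac{\delta}{\delta\theta^i}\partial_x=0$ of Lemma \ref{variationofatotal}, exactly as you spell out. Since every term of $[P,\partial Q]$ carries a factor $\frac{\delta(\partial Q)}{\delta u^i}$ or $\frac{\delta(\partial Q)}{\delta\theta^i}$, both of which vanish, nothing further is needed.
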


 Using this formalism, it might be useful to have a way to express $\frac{\delta}{\delta u^i}(f\theta^j_{p})$ using a formula in which the $\theta$-variables do not appear under an operator sign. This is provided by the following:
 \begin{lemma}\label{portarethetafuorilm}
 Let $f\in \mathcal{A}$ be homogenous of degree $k$. Then the following formula holds true:
 \begin{equation}\label{portarethetafuori1.eq}
 \frac{\delta}{\delta u^i}(f\theta^j_{p})=\sum_{l=0}^k (-1)^l  \frac{\delta}{\delta u^i_{(l)}}(f) \theta^j_{p+l},
 \end{equation}
 where moreover
 $$\frac{\delta}{\delta u^i_{(l)}}(f)=\sum_{h=0}^{k-l} (-1)^h\binom{l+h}{h}\partial^h\left(\frac{\partial f}{\partial u^i_{(l+h)}}\right).$$
 \end{lemma}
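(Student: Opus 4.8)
The plan is to expand the outer variational derivative using its definition \eqref{variationalu.eq}, exploit the fact that the anticommuting variable $\theta^j_p$ carries no dependence on the $u$'s, and then reorganize the resulting double sum by collecting the coefficient of each $\theta^j_{p+l}$.

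First I would write
$$\frac{\delta}{\delta u^i}(f\theta^j_p)=\sum_{m\geq 0}(-\partial_x)^m\frac{\partial}{\partial u^i_{(m)}}(f\theta^j_p).$$
Since $\theta^j_p$ is independent of $u^i_{(m)}$, the inner partial derivative acts only on $f$, producing $\left(\frac{\partial f}{\partial u^i_{(m)}}\right)\theta^j_p$. Because $\partial_x$ is a derivation on $\mathcal{A}(\Theta)$ (formula \eqref{partialextended.eq}) and $\partial_x^a\theta^j_p=\theta^j_{p+a}$, the Leibniz rule gives
$$(-\partial_x)^m\left[\left(\frac{\partial f}{\partial u^i_{(m)}}\right)\theta^j_p\right]=(-1)^m\sum_{a=0}^m\binom{m}{a}\partial_x^{m-a}\left(\frac{\partial f}{\partial u^i_{(m)}}\right)\theta^j_{p+a}.$$
Here $f$ is bosonic and $\theta^j_p$ fermionic, so no sign corrections arise when $f$ and $\theta$ are interchanged; the only signs are those of $(-\partial_x)^m$.

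Next I would collect the coefficient of a fixed $\theta^j_{p+l}$. Setting $a=l$ and $m=l+h$ with $h\geq 0$, the double sum becomes
$$\frac{\delta}{\delta u^i}(f\theta^j_p)=\sum_{l\geq 0}(-1)^l\left[\sum_{h\geq 0}(-1)^h\binom{l+h}{h}\partial_x^h\left(\frac{\partial f}{\partial u^i_{(l+h)}}\right)\right]\theta^j_{p+l},$$
where I used $\binom{l+h}{l}=\binom{l+h}{h}$ and $(-1)^{l+h}=(-1)^l(-1)^h$. The bracketed inner sum is precisely the higher-order variational derivative $\frac{\delta}{\delta u^i_{(l)}}(f)$ of definition \eqref{variationalhigheru.eq}, which establishes the first displayed identity. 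Finally, since $f$ is homogeneous of degree $k$ with $\deg(u^i_{(s)})=s$, the partial derivative $\frac{\partial f}{\partial u^i_{(l+h)}}$ vanishes whenever $l+h>k$; this truncates the $h$-sum at $h=k-l$ and the $l$-sum at $l=k$, reproducing the two finite sums in the statement.

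The computation contains no conceptual difficulty; the step I expect to require the most care is the reindexing of the double sum together with the binomial identity $\binom{m}{a}=\binom{l+h}{h}$, which must be matched exactly against the combinatorial weights in definition \eqref{variationalhigheru.eq}. Keeping track that $\theta^j_p$ is annihilated by $\partial/\partial u^i_{(m)}$ but shifted to $\theta^j_{p+a}$ by $\partial_x^a$ is what makes the $\theta$-index bookkeeping transparent.
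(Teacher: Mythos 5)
Your proof is correct and follows essentially the same route as the paper's: expand the definition of $\frac{\delta}{\delta u^i}$, use that $\theta^j_p$ is independent of the $u$'s, apply the Leibniz rule for $\partial_x^m$ on the product, reindex the double sum by the shift $l$ of the $\theta$-index, and recognize the inner sum as the higher-order variational derivative \eqref{variationalhigheru.eq}, with homogeneity of $f$ truncating the sums. The only cosmetic difference is that the paper invokes the degree-$k$ truncation at the start rather than at the end, and indexes the Leibniz expansion by the derivatives falling on $f$ instead of on $\theta^j_p$; these are equivalent bookkeeping choices.
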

 \proof See Appendix.
 \endproof
 
Entirely similar formulas hold for more complicated expressions. 

Once we have an expression written using the anti-commutative variables $\theta^i$ and their derivatives, in order to revert to the Dubrovin-Zhang formalism it is necessary to apply a normalization operator $\mathcal{N}:=\sum_{i} \theta^i \frac{\delta}{\delta \theta^i}$ (introduced in \cite{B}) before deleting all the instances of $\theta^i$ appearing on the left and making the substitution $\theta_r \mapsto \delta^{(r)}$.
 Let's work out in more detail a simple example assuming that the target space is one dimensional. 
Consider a vector field $X=\partial^n_x f_n$, where $f_n(u)$ is an arbitrary function of $u$. We want to compute the Lie derivative of $\omega_1=\delta^{(1)}$ with respect to $X$. First we transform $\omega_1$ in the formalism with $\theta$'s, where it appears as $\omega_1=\theta \theta_1$, while $X=\partial^n_x f_n \theta$. Then we recall that the Poisson cohomology operator $d_1$ associated to $\omega_1$ is equal to : $d_1=2\theta_1 \frac{\delta}{\delta u}.$ So we have:
$$\mathrm{Lie}_X (\omega_1)=d_1(X)=2\theta_1 \frac{\delta}{\delta u}\left(\partial^n_x f \theta\right)=2\theta_1 \frac{\delta}{\delta u}\left((-1)^n f \theta_n\right),$$
where the last equality holds integrating by part inside the variational derivative and recalling that the variational derivative of a total derivative is identically zero. 
Thus we obtain:
$$\mathrm{Lie}_X (\omega_1)=(-1)^n2\frac{\partial f_n}{\partial u}\theta_1 \theta_n.$$
At this point, applying the normalization operator $\mathcal{N}$ we obtain:
$$\theta \frac{\delta}{\delta \theta}\left((-1)^n2\frac{\partial f_n}{\partial u}\theta_1 \theta_n \right)=$$
$$=(-1)^n\theta (-1)\partial_x\left( \frac{\partial}{\partial \theta_1}\left(2\frac{\partial f_n}{\partial u}\theta_1 \theta_n \right)\right)+(-1)^{2n}\theta\partial^n_x
\left( \frac{\partial}{\partial \theta_n}\left(2\frac{\partial f_n}{\partial u}\theta_1 \theta_n \right)\right)$$
$$=-(-1)^n \theta\partial_x\left(2\frac{\partial f_n}{\partial u}\theta_n  \right)-\theta \partial^n_x\left(   \frac{\partial}{\partial \theta_n}\left(\frac{\partial f_n}{\partial u}\theta_n \theta_1 \right)\right) $$
$$=-(-1)^n \theta\partial_x\left(2\frac{\partial f_n}{\partial u}\theta_n  \right)-\theta \partial^n_x\left(2\frac{\partial f_n}{\partial u} \theta_1 \right).$$
Now we can cancel the $\theta$ appearing on the left and substitute $\theta_1$ with $\delta^{(1)}(x-y)$ and $\theta_n$ with $\delta^{(n)}(x-y)$, obtaining
$$-(-1)^n\partial_x \left(2\frac{\partial f_n}{\partial u}\delta^{(n)}(x-y) \right)-\partial^n_x\left(2\frac{\partial f_n}{\partial u} \delta^{(1)}(x-y) \right).$$
This is exactly the formula for $d_1X_n$ appearing in Theorem \ref{def5th.th}

\section{The scalar case}
In the scalar case we have
$$\omega_1=f(u)\delta'(x-y)+\f{1}{2}f_x\delta(x-y)$$
and
$$\omega_2=g(u)\delta'(x-y)+\f{1}{2}g_x\delta(x-y).$$
Without loss of generality we can assume $f(u)=1$. For simplicity we 
 will consider  the special case $g(u)=2u$. In this case the pencil
$$\omega_2-\lambda\omega_1$$
is \emph{exact}. This means that there exist a vector field ($X=\f{1}{2}$) such that
$${\rm Lie}_X\omega_2=\omega_1,\qquad {\rm Lie}_X\omega_1=0.$$
For more about exact Poisson pencils and their deformations, see the next section.
Only two examples of deformations of $P_{\lambda}=\omega_2-\lambda\omega_1$ are known. One is the Poisson pencil of KdV which is
$$P_{\lambda}=\omega_2-\lambda\omega_1+c\delta'''(x-y),$$
 the second is the Poisson pencil of Camassa-Holm equation, that can be written in the form
$$\omega_2-\lambda\omega_1+P_{\epsilon}$$
with $P_{\epsilon}$ given by:
$$\sum_{n=1}\epsilon^{2n}\left[\d_x\,\left(u\,\delta^{(2n)}(x-y)\right)+\d^{2n}_x\,\left(u\,\delta'(x-y)\right)\right]+$$
$$+\sum_{n=1}\epsilon^{2n+1}\left[\d_x\left(u\, \delta^{(2n+1)}(x-y)\right)-\d_x^{2n+1}\left(u\, \delta'(x-y)\right)\right].$$

\begin{theorem}\label{def5th.th}
Up to the fifth order all deformations of the pencil 
\begin{equation*}
P_{\lambda}=\omega_2-\lambda\omega_1=2u(x)\delta'(x-y)+u_x\delta(x-y)-\lambda\delta'(x-y)
\end{equation*}
can be reduced, by the action of Miura group, to the following form
$$P^{\epsilon}_{\lambda}=\omega_2-\lambda\omega_1-\sum_{k=1}^5\epsilon^k d_1 X_k+\mathcal{O}(\epsilon^6)$$
where 
\begin{eqnarray*}
X_{n}&=&\left(\d_x^{n} f_{n}\right)\f{\d}{\d u}+\sum_{k\ge1}\epsilon^k\left(\d_x^{n+k} f_{n}\right)\f{\d}{\d u^{(k)}},\,n=1,\dots,5\\
d_1 X_n&=&-(-1)^n\d_x\left[\f{\d f}{\d u}\delta^{(n)}(x-y)\right]-\d_x^n\left[\f{\d f}{\d u}\delta'(x-y)\right]\\
\f{\d f_4}{\d u}&=&-\f{\d}{\d u}\left(\f{\d f_2}{\d u}\right)^2\\
\f{\d^2 f_{5}}{\d u^2}&=&-2\f{\d^2 f_{3}}{\d u^2}\f{\d^2 f_{2}}{\d u^2}
\end{eqnarray*}
and $f_2,f_3$ are arbitrary. The deformations are trivial if and only if $f_2=0$.
\end{theorem}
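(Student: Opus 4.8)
The plan is to impose the Poisson condition $[P^\epsilon_\lambda,P^\epsilon_\lambda]=0$ and solve it recursively in $\epsilon$, reading off the constraints on the $f_n$. First I would dispose of the $\lambda$-dependence. Since the deformation lives entirely in $\omega_2$, the coefficient of $\lambda^2$ is $[\omega_1,\omega_1]=0$, while the coefficient of $\lambda$ is $-2[\omega_1,\omega_2]-2\sum_n\epsilon^n[\omega_1,P_n]=-2\sum_n\epsilon^n d_1 P_n$, which vanishes once each $P_n$ is a $d_1$-coboundary. Thus the whole content is the $\lambda$-independent equation, whose order-$\epsilon^n$ component is exactly the recursion \eqref{rec2}, $d_2 P_n=-\tfrac12\sum_{k=1}^{n-1}[P_k,P_{n-k}]$, solvable at each step by the compatibility proposition of Section 2 (vanishing of $H^3(\mathcal L(\mathbb R),\omega_2)$). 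The reduction to $P_n=d_1 X_n$ with $X_n$ generated by a single function $f_n(u)$ of the stated form is justified by the triviality of $H^2(\mathcal L(\mathbb R),\omega_1)$ \cite{G} — which eliminates corrections to $\omega_1$ and turns the $\lambda$-linear condition $d_1 P_n=0$ into $P_n=d_1 X_n$ — together with the scalar quasitriviality of Liu--Zhang \cite{LZ2} recalled in Section 2 and the \emph{exactness} of the undeformed pencil (Sections 4--5), which forces the generators into the one-parameter family whose Lie derivative $d_1 X_n$ was computed explicitly in the worked example at the end of Section 3.

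Then I would translate the recursion into the $\theta$-formalism of Section 3, where it becomes tractable. Writing $P_n=d_1 X_n$ and using $(d_1+d_2)^2=0$, equation \eqref{rec2} is equivalent to $d_1 d_2 X_n=\tfrac12\sum_{k=1}^{n-1}[d_1 X_k,d_1 X_{n-k}]$. The right-hand side is automatically $d_1$-closed by the computation \eqref{d1d2} and $d_2$-closed by the compatibility proposition, so the only question at each order is whether it is $d_1$-exact within the one-function family; the differential constraint on $f_n$ is precisely the condition for this. Concretely I would evaluate $d_2 P_n$ and the Schouten brackets $[d_1 X_k,d_1 X_{n-k}]$ using \eqref{schouten2} and Lemma \ref{portarethetafuorilm}, apply the normalization operator $\mathcal N$, and compare coefficients of the independent monomials $\theta_a\wedge\theta_b$.

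Carried out order by order: at $n=1$ one gets $d_2 P_1=0$, and since $H^2_1(\mathcal L(\mathbb R),\omega_1,\omega_2)=0$ the first-order term is trivial and may be gauged to zero; at $n=2,3$, with $P_1=0$ the right-hand side vanishes and $d_2 P_n=0$ holds identically for the one-function ansatz (as in \cite{L}), so $f_2,f_3$ stay free, with $f_2$ the relevant central-invariant datum; at $n=4$ the first genuinely quadratic term $-\tfrac12[P_2,P_2]$ appears and matching forces $\partial f_4/\partial u=-\partial/\partial u(\partial f_2/\partial u)^2$; at $n=5$ the cross term $-[P_2,P_3]$ forces $\partial^2 f_5/\partial u^2=-2(\partial^2 f_3/\partial u^2)(\partial^2 f_2/\partial u^2)$.

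Finally, the triviality criterion. By Theorem \ref{trivial} and its higher-order analogues, the deformation is Miura-trivial iff the generating class vanishes in $\bigoplus_k H^2_k(\mathcal L(\mathbb R),\omega_1,\omega_2)$; since $H^2_k=0$ for $k\neq 3$, all surviving freedom beyond the central invariant is removable, so the deformation is trivial precisely when $f_2=0$. I expect the main obstacle to be the explicit evaluation of the quadratic brackets $[P_k,P_{n-k}]$ at orders four and five and the verification of their $d_1$-exactness: this is where the bookkeeping is heaviest and where exactness of the pencil does the essential work by guaranteeing that no obstruction survives beyond the displayed quadratic relations, the cohomological reductions around it being comparatively soft once the results of Section 2 are invoked.
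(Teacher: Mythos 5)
Your proposal splits into two halves, and they fare very differently. The verification half --- writing $P_n=d_1X_n$, passing to the $\theta$-formalism, using the recursion $d_2P_n=-\tfrac12\sum_k[P_k,P_{n-k}]$, and matching coefficients of the monomials $\theta_1\theta_a\theta_b$ to extract $\f{\d f_4}{\d u}=-\f{\d}{\d u}\left(\f{\d f_2}{\d u}\right)^2$ at order four and $\f{\d^2 f_5}{\d u^2}=-2\f{\d^2 f_3}{\d u^2}\f{\d^2 f_2}{\d u^2}$ at order five --- is exactly what the paper does; note, though, that this is all the paper does: it explicitly proves only that the displayed bivector is Poisson up to $\mathcal{O}(\epsilon^6)$ and cites \cite{L} for the classification statement itself.

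The gap is in the other half, the one the paper does not reprove: your reduction of an arbitrary deformation to the one-function ansatz $X_n=(\d_x^nf_n(u))\f{\d}{\d u}+\dots$ modulo the Miura group. The results you invoke do not deliver it. Triviality of $H^2(\mathcal{L}(\mathbb{R}),\omega_1)$ gives $P_n=d_1X_n$ for a \emph{general} homogeneous evolutionary vector field $X_n$ of degree $n$, and the exactness/tangency theorem of Section 5 only improves this to $X_n=\d_x F_{n-1}$ with $F_{n-1}$ a general differential polynomial of degree $n-1$. For $n=2$ this happens to coincide with the one-function form, since $\d_x(A(u)u_x)=\d_x^2f_2$ with $f_2'=A$; but already for $n=3$ a tangent field $X_3=\d_x\left(a(u)u_{xx}+b(u)u_x^2\right)$ is of the form $\d_x^3f_3$ only when $b=a'$, and you must show that the discrepancy $\d_x\left((b-a')u_x^2\right)$ generates a Miura-trivial deformation, i.e. that it can be absorbed into ${\rm Im}(d_1)\oplus{\rm Im}(d_2)$ in the sense of Theorem \ref{trivial}. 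That absorption --- and, at orders four and five, the fact that no functional freedom survives beyond the displayed quadratic relations --- is precisely the gauge-fixing/bi-Hamiltonian-cohomology computation carried out in \cite{L}; neither Liu--Zhang quasitriviality nor exactness of the pencil substitutes for it (quasitriviality even allows non-polynomial generators, which is the opposite of what you need). The same issue infects your triviality criterion: identifying $f_2$ with the nonzero class in $H^2_3$ requires knowing the normal form is attainable in the first place. So either cite \cite{L} for the classification, as the paper does, and present only the Poisson verification, or add the explicit order-by-order gauge-fixing argument.
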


\noindent
This theorem has been proved in \cite{L}. For convenience of the reader, and as an example of the alternative formalism we outlined in the previous section, we will prove that the bivector defined above is Poisson up to terms of order $O(\epsilon^6)$ 

\noindent
\begin{proof}
First of all let us check that
$$[P^{\epsilon}_{\lambda},P^{\epsilon}_{\lambda}]=\mathcal{O}(\epsilon^6)$$
or, using the alternative formalism that
$$\{\hat{P}^{\epsilon}_{\lambda},\hat{P}^{\epsilon}_{\lambda}\}=\mathcal{O}(\epsilon^6),$$
(we denote with $\hat{P}$ the corresponding quantity in the alternative formalism introduced in Section 3).
This is equivalent to
$$2\{\omega_2,\widehat{d_1 X_{n}}\}+\sum_{k=1}^{n-1}\{\widehat{d_1 X_{2k}},\widehat{d_1 X_{2n-2k}}\},\qquad n=1,2.$$
Using the identities
\begin{eqnarray*}
&&\f{\d}{\d u_s}\d_x^n=\sum_{l=0}^n\binom{n}{l}\d_x^l\f{\d}{\d u_{s-n+l}}\\
&&\sum_{s=0}^n(-1)^s\binom{s}{k}\binom{n}{s}=(-1)^n\delta_{n,k}
\end{eqnarray*}
it is easy to prove that (from now on we will omit the symbol of wedge product among the anti-commuting variables $\theta$)
\begin{eqnarray}
&&\f{\delta \widehat{d_1 X_n}}{\delta u}=2(-1)^n\f{\d^2 f_n}{\d u^2}\theta_1\theta_n\\
&&\f{\delta \widehat{d_1 X_n}}{\delta\theta}=-2(-1)^n\d_x\left(\f{\d f_n}{\d u}\theta_n\right)-2\d_x^n\left(\f{\d f_n}{\d u}\theta_1\right).
\end{eqnarray}
Using the formulas above we obtain
\begin{eqnarray*}
&&2\{\omega_2,\widehat{d_1 X_{2n}}\}+\sum_{k=1}^{n-1}\{\hat{P}_{k},\hat{P}_{n-k}\}=\\
&&4\f{\d f_{n}}{\d u}\theta_0\theta_1\theta_{n+1}+4\d_x^{n}(\theta_0\theta_1)\f{\d f_{n}}{\d u}\theta_1+\\
&&+8\sum_{k=1}^{n-1}\f{\d f_{k}}{\d u}\f{\d^2 f_{n-k}}{\d u^2}\theta_1\theta_{n-k}\theta_{k+1}
+8\sum_{k=1}^{n-1}(-1)^{k}\d_x^{k}\left[\f{\d f_{k}}{\d u}\theta_1\right]\left[\f{\d^2 f_{n-k}}{\d u^2}\theta_1\theta_{n-k}\right]=0
\end{eqnarray*}
Taking into account  that
$$\d_x^{n}(\theta_0\theta_1)=\theta_0\theta_{n+1}+\sum_{k=1}^{\left[\f{n}{2}\right]}\left[\binom{n}{k}-\binom{n}{k-1}\right]\,\theta_k\theta_{n+1-k}$$
(where the square bracket denotes the integer part of the fraction) and dividing by 4, we obtain
\begin{eqnarray*}
&&\f{\d f_{n}}{\d u}\theta_0\theta_1\theta_{n+1}-\f{\d f_{n}}{\d u}\theta_0\theta_1\theta_{n+1}+\sum_{k=2}^{\left[\f{n}{2}\right]}\left[\binom{n}{k}-\binom{n}{k-1}\right]\,\f{\d f_{n}}{\d u}\theta_1\theta_k\theta_{n+1-k}+\\
&&+2\sum_{k=2}^{\left[\f{n}{2}\right]}
\left\{\f{\d f_{n-k}}{\d u}\f{\d^2 f_{k}}{\d u^2}-\f{\d f_{k-1}}{\d u}\f{\d^2 f_{n+1-k}}{\d u^2}\right\}\theta_1\theta_{k}\theta_{n+1-k}+\\
&&+2\sum_{k=1}^{n-1}(-1)^{n-k}\d_x^{n-k}\left[\f{\d f_{n-k}}{\d u}\theta_1\right]\left[\f{\d^2 f_{k}}{\d u^2}\theta_1\theta_{k}\right]
\end{eqnarray*}
that implies
\begin{eqnarray*}
&&\sum_{k=2}^{\left[\f{n}{2}\right]}
\left\{
\left[\binom{n}{k}-\binom{n}{k-1}\right]
\f{\d f_{n}}{\d u}+2\f{\d f_{n-k}}{\d u}\f{\d^2 f_{k}}{\d u^2}
-2\f{\d f_{k-1}}{\d u}\f{\d^2 f_{n+1-k}}{\d u^2}\right\}\theta_1\theta_{k}\theta_{n+1-k}+\\
\nonumber
&&+2\sum_{k=2}^{n-2}(-1)^{n-k}\d_x^{n-k}\left[\f{\d f_{n-k}}{\d u}\theta_1\right]\left[\f{\d^2 f_{k}}{\d u^2}\theta_1\theta_{k}\right]=0
\end{eqnarray*}
In the case $n=2$ the equation above is clearly satisfied for arbitrary
 $f_2$ e $f_3$. For $n=4$ we obtain
\begin{eqnarray*}
&&\left\{
\left[\binom{4}{2}-\binom{4}{1}\right]
\f{\d f_{4}}{\d u}+2\f{\d f_{2}}{\d u}\f{\d^2 f_{2}}{\d u^2}
\right\}\theta_1\theta_{2}\theta_{3}+2\d_x^{2}\left[\f{\d f_{2}}{\d u}\theta_1\right]\left[\f{\d^2 f_{2}}{\d u^2}\theta_1\theta_{2}\right]=\\
&&2\left\{\f{\d f_{4}}{\d u}+2\f{\d f_{2}}{\d u}\f{\d^2 f_{2}}{\d u^2}\right\}\theta_1\theta_{2}\theta_{3}=0
\end{eqnarray*}
that implies
$$\f{\d f_{4}}{\d u}=-\f{\d}{\d u}\left(\f{\d f_2}{\d u}\right)^2.$$
In the case $n=5$ we obtain
\begin{eqnarray*}
&&\left\{
\left[\binom{5}{2}-\binom{5}{1}\right]
\f{\d f_{5}}{\d u}+2\f{\d f_{3}}{\d u}\f{\d^2 f_{2}}{\d u^2}\right\}\theta_1\theta_{2}\theta_{4}+\\
&&+2\sum_{k=2}^{3}(-1)^{5-k}\d_x^{5-k}\left[\f{\d f_{5-k}}{\d u}\theta_1\right]\left[\f{\d^2 f_{k}}{\d u^2}\theta_1\theta_{k}\right]=\\
&&\left\{-5\d_x\left(\f{\d f_{5}}{\d u}\right)-10\f{\d^2 f_{3}}{\d u^2}\f{\d^2 f_{2}}{\d u^2}u_x\right\}\theta_1\theta_{2}\theta_{3}=0
\end{eqnarray*}  
that implies
$$\f{\d^2 f_{5}}{\d u^2}=-2\f{\d^2 f_{3}}{\d u^2}\f{\d^2 f_{2}}{\d u^2}.$$

\end{proof}

\noindent

Unfortunately, as we mentioned in the introduction,  it is not possible to extend the previous formulas to the case $n=6$ . Indeed in this case
 we obtain
\begin{eqnarray*}
&&\sum_{k=2}^{3}
\left\{
\left[\binom{6}{k}-\binom{6}{k-1}\right]
\f{\d f_{6}}{\d u}+2\f{\d f_{6-k}}{\d u}\f{\d^2 f_{k}}{\d u^2}
-2\f{\d f_{k-1}}{\d u}\f{\d^2 f_{7-k}}{\d u^2}\right\}\theta_1\theta_{k}\theta_{7-k}+\\
&&+2\sum_{k=2}^{4}(-1)^{6-k}\d_x^{6-k}\left[\f{\d f_{6-k}}{\d u}\theta_1\right]\left[\f{\d^2 f_{k}}{\d u^2}\theta_1\theta_{k}\right]=\\
&&\left\{9\f{\d f_{6}}{\d u}+4\f{\d f_{4}}{\d u}\f{\d^2 f_{2}}{\d u^2}\right\}\theta_1\theta_{2}\theta_{5}+\left\{5\f{\d f_{6}}{\d u}-4\f{\d f_{2}}{\d u}\f{\d^2 f_{4}}{\d u^2}\right\}\theta_1\theta_{3}\theta_{4}+\\
&&+\left\{4\d_x\left(\f{\d f_{4}}{\d u}\right)\f{\d^2 f_{2}}{\d u^2}\right\}
\theta_1\theta_{2}\theta_{4}+\left\{6\d_x^2\left(\f{\d f_{3}}{\d u}\right)\f{\d^2 f_{3}}{\d u^2}+12\d_x^2\left(\f{\d f_{4}}{\d u}\right)\f{\d^2 f_{2}}{\d u^2}\right\}\theta_1\theta_{2}\theta_{3}
\end{eqnarray*}
that implies
$$\f{\d f_{6}}{\d u}=-\f{\d}{\d u}\left(\f{\d f_2}{\d u}\f{\d f_4}{\d u}\right)$$
plus two additional conditions relating $f_2,f_3,f_4$ which are compatible only if
  $f_2(u)$ is a polynomial of degree 2.

As we will see the higher order deformations are much more complicated.

\section{Deformations of exact pencils in the scalar case}
In general, if $\omega_1$ is a Poisson structure, then $\omega_2:={\rm Lie}_X (\omega_1)$ is compatible with $\omega_1$ (in the sense that $[\omega_1, {\rm Lie}_X (\omega_1)]=0$) but it might fail to be Poisson itself. A simple sufficient condition ensuring that $\omega_2$ is Poisson is the notion of exact pencil, which was introduced in \cite{CMP}. 
\begin{definition}
Given a Poisson structure $\omega_1$ and a vector field $X$ such that $\omega_2:={\rm Lie}_X( \omega_1)\neq 0$ and ${\rm Lie}_X (\omega_2)=0$, we say that the pencil 
$\omega_{\lambda}:=\omega_1-\lambda\omega_2$ is an exact pencil.  
\end{definition}
Notice that the condition ${\rm Lie_X}(\omega_2)=0$ guarantees that ${\rm Lie}_X (\omega_1)$ is indeed a Poisson structure. In fact $0={\rm Lie}_X([\omega_2, \omega_1]=[{\rm Lie}_X(\omega_2), \omega_1]+[\omega_2, {\rm Lie}_X(\omega_1)]$, but ${\rm Lie}_X(\omega_2)=0$ by definition, so $[\omega_2, {\rm Lie}_X(\omega_1)]=[\omega_2, \omega_2]=0$. 

The following Lemma classifies exact Poisson pencils of hydrodynamic type in the scalar case:
\begin{lemma}\label{exactclassification}
In the scalar case, all exact Poisson pencils have the form 
$$\omega_{\lambda}=\omega_2-\lambda \omega_2=(au+b)\delta'(x-y)+\frac{1}{2}au_x\delta(x-y)-\lambda\delta'(x-y)=(au+b)\theta\theta_1-\lambda \theta\theta_1,$$
for arbitrary constants $a,b$. 
\end{lemma}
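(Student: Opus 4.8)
The plan is to start from the most general scalar Poisson pencil of hydrodynamic type and impose the exactness conditions directly. In the scalar case a bivector of hydrodynamic type has, by Theorem~\ref{th1.th}, the form $\omega = g(u)\delta'(x-y) + \frac{1}{2}g_x\delta(x-y)$, since with $N=1$ the flatness of $g$ is automatic and the Christoffel-symbol relation forces the coefficient of $\delta(x-y)$ to be exactly $\frac{1}{2}g_x = \frac{1}{2}g'(u)u_x$. So a pencil of hydrodynamic type is determined by two functions: write $\omega_1 = g(u)\theta\theta_1$ and, for the deforming vector field, a general degree-zero evolutionary vector field $X = p(u)\frac{\partial}{\partial u}$, i.e.\ $X = p(u)\theta$ in the alternative formalism. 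The goal is to compute ${\rm Lie}_X\omega_1 = d_1 X = [\omega_1, X]$, demand that ${\rm Lie}_X({\rm Lie}_X\omega_1) = 0$, and read off the constraints on $g$ and $p$.

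\textbf{Carrying out the Lie derivatives.} First I would compute $\omega_2 := {\rm Lie}_X\omega_1$ using the Schouten-bracket formula \eqref{schouten2} together with Lemma~\ref{portarethetafuorilm}. Because $X$ has degree $0$ and $\omega_1$ has degree $2$, the result $\omega_2$ is again a bivector of hydrodynamic type, hence automatically of the form $h(u)\theta\theta_1$ for some function $h$; the computation should yield $h$ explicitly in terms of $g$ and $p$ (I expect something proportional to $g p' - g' p$ up to a sign and a factor, reflecting that the Lie derivative of $g(u)$ along $p(u)\partial_u$ acting on a weight-one object produces $p g' - $ a term from the transport of the density). The key simplification is that the whole family of hydrodynamic-type bivectors is parametrized by a single function, so both ${\rm Lie}_X\omega_1$ and ${\rm Lie}_X\omega_2$ stay inside this one-function family, and each Lie derivative reduces to a first-order linear ODE operator acting on that function. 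Then ${\rm Lie}_X\omega_2 = 0$ becomes a single ODE relating $g$ and $p$.

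\textbf{Normalizing and solving.} Exactness also requires $\omega_2 \neq 0$. The residual freedom is a change of the flat coordinate (Miura transformations of degree $0$), under which one may normalize $\omega_1$; following the convention already used in the ``scalar case'' section I would set $\omega_1 = \theta\theta_1$, i.e.\ $g \equiv 1$. Now the condition ${\rm Lie}_X\omega_1 = \omega_2 \neq 0$ and ${\rm Lie}_X\omega_2 = 0$ collapses to an ODE for the single unknown $p(u)$. I expect the equation ${\rm Lie}_X\omega_2 = 0$ to force $p(u)$ to be an affine function of $u$, say $p = \alpha u + \beta$, which then gives $\omega_2 = c(u)\theta\theta_1$ with $c(u)$ itself affine, $c(u) = au + b$. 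Matching constants and the roles of $\omega_1,\omega_2$ in the Definition yields exactly the stated normal form $\omega_\lambda = (au+b)\theta\theta_1 - \lambda\,\theta\theta_1$.

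\textbf{The main obstacle.} The only genuinely delicate point is the bookkeeping in the two successive Schouten-bracket computations: one must track the density term $\frac{1}{2}g_x\delta(x-y)$ correctly, since it is precisely the interplay between the $\delta'$ and $\delta$ parts that distinguishes a true Lie derivative of a hydrodynamic bracket from a naive transport of the metric. In the $\theta$-formalism this is handled automatically because $\frac{1}{2}g_x\theta\theta = 0$ by anticommutativity, so the bivector is genuinely encoded by the single coefficient of $\theta\theta_1$; nonetheless I would verify that applying the normalization operator $\mathcal{N}$ after each bracket returns a bivector of the expected one-function form, so that the reduction to a scalar ODE is legitimate. Once that reduction is secured, solving the ODE and identifying the affine solution is routine, and the classification follows.
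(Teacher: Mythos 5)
Your overall plan --- impose the exactness conditions on a scalar hydrodynamic pencil in the $\theta$-formalism, reduce everything to ODEs for the coefficient functions, and use the flat-coordinate (Miura) freedom to normalize --- is the same in spirit as the paper's proof, but the specific normalization you chose makes your predicted outcome false, and this is a genuine gap rather than a cosmetic one. The two structures of an exact pencil play asymmetric roles: the source $\omega_1$ (with $\omega_2:={\rm Lie}_X\omega_1$) and the image $\omega_2$, which is annihilated by $X$. In the Lemma's normal form it is the \emph{image} that is the constant structure $\theta\theta_1$ and the \emph{source} that is affine. The paper normalizes the structure annihilated by $X$ to $\theta\theta_1$: then ${\rm Lie}_X(\theta\theta_1)=0$ forces $X=c\theta$ with $c$ constant, and ${\rm Lie}_X(g\theta\theta_1)=cg'\theta\theta_1=\theta\theta_1$ forces $g$ affine, so the Lemma drops out immediately. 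You normalize the \emph{source} instead ($g\equiv 1$). Doing the computation correctly --- note that for $X=p\theta$ and $\omega=h\theta\theta_1$ one has ${\rm Lie}_X\omega=(ph'-2p'h)\,\theta\theta_1$, the Lie derivative of a contravariant $2$-tensor, not something proportional to $hp'-h'p$ as you guess; the factor $2$ coming from the density weight is essential --- your two conditions become $\omega_2=-2p'\theta\theta_1$ and $pp''=2(p')^2$. The nonconstant solutions are $p=1/(D-Cu)$, which is \emph{not} affine, and then $\omega_2\propto (D-Cu)^{-2}\theta\theta_1$, which is neither constant nor affine. Indeed your expected solution fails outright: if $p=\alpha u+\beta$ with $\alpha\neq 0$, then $\omega_2=-2\alpha\,\theta\theta_1$ and ${\rm Lie}_X\omega_2=4\alpha^2\theta\theta_1\neq 0$, violating exactness; if $\alpha=0$ then $\omega_2=0$, which is excluded.

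Your route can be repaired, but it needs a step your outline does not contain: after solving the ODE one has the pencil $\theta\theta_1-\lambda\,h\,\theta\theta_1$ with $h=-2C/(D-Cu)^2$, and one must perform a \emph{second} change of flat coordinate, $v=Du-\tfrac{C}{2}u^2$ (so that $\phi'(u)=D-Cu$), under which $\omega_2$ becomes the constant structure $-2C\,\theta\theta_1$ while $\omega_1=\theta\theta_1$ becomes $(D-Cu)^2\theta\theta_1=(D^2-2Cv)\,\theta\theta_1$, i.e.\ affine in $v$; only then does the stated normal form appear. So the classification is recoverable along your lines, but not by the argument you describe: the claim that ${\rm Lie}_X\omega_2=0$ forces $p$ affine is wrong, and the final ``matching constants and roles'' step conceals a nontrivial quadratic coordinate change. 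The economical fix is simply to normalize the structure annihilated by $X$, as the paper does, after which both exactness conditions become first-order linear constraints and the affine answer is immediate.
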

\begin{proof}
By triviality of the Poisson cohomology, we can always assume that one of the Poisson structures is $\partial_x$. Therefore, without loss of generality we can take $\omega_1:=\partial_x=\theta\theta_1$. 
We look for a vector field $X=f(u)\theta$, such that ${\rm Lie}_X(\omega_1)=d_1(X)=0$. Since $d_1=2\theta_1 \frac{\delta}{\delta u}$, we have that $d_1(X)=0$ is equivalent to requiring $f(u)$ to be a constant, call it $c$, so that $X=c\theta$. Given $\omega_2:=g(u)\theta \theta_1$, we search under which conditions on $g(u)$ $d_1(X)={\rm Lie}_X(\omega_2)$ is equal to $\omega_1$. We have
$$d_2(X)=\left(2g(u)\theta_1 +\frac{\partial g}{\partial u}u_x\theta \right)\frac{\delta}{\delta u}(c\theta)+\frac{\partial g}{\partial u}\theta\theta_1\frac{\delta}{\delta \theta}(c\theta)=c\frac{\partial g}{\partial u}\theta\theta_1.$$
Therefore $d_2(X)=\omega_1$ (up to the action of the normalization operator $\mathcal{N}$ which in this case acts as multiplication by $2$) if and only if $g(u)$ is at most affine in $u$: $g(u)=au+b$.  Moreover, since $c$ is an arbitrary constant different from zero, we can choose $c=\frac{1}{a}$, so that $d_2(X)$ is indeed $\omega_1$. Finally it is immediate to check that if $\omega_2$ has the form $(au+b)\theta\theta_1$, then $d_1(X)=0$ and 
$\omega_1$ is indeed Poisson. 
\end{proof}

Now we consider the deformation of a general pencil $\omega_2-\lambda \omega_1$ as follows:
\begin{equation}\label{deformationulterior}
\omega_2-\lambda\omega_1-d_1 X_{\epsilon}
=2g(u)\delta'(x-y)+\frac{\partial g}{\partial u}u_x\delta(x-y)-\lambda\delta'(x-y)-d_1 X_{\epsilon},\end{equation}
where $X_{\epsilon}=\sum_{k=1}^{\infty}\epsilon^k X_k$ and ${\rm deg}(X_k)=k$. (Notice that we have exchanged the names of $\omega_1$ and $\omega_2$, since we want to emphasize the role of $\delta'(x-y)$). 
By Miura quasi-triviality, the deformation vector field $X_{\epsilon}$ always exists, coming from Hamiltonian functionals
which are possibly not polynomials in the derivatives of $u$. 

It is not restrictive to assume that
 the odd powers in $\epsilon$ are missing, since this can be always achieved by performing a suitable Miura transformation.
 For istance, the third and fifth order deformations obtained in the previous section can be eliminated just by
 putting $f_3=0$, when $g(u)=u$. 
 For deformations of the form \eqref{deformationulterior} we have the following result:
\begin{theorem}
In the scalar case, the vector field $X_{\epsilon}$  is tangent to the symplectic leaves
 of $\omega_1$ if and only if the undeformed pencil $\omega_2-\lambda\omega_1$ is exact.
\end{theorem}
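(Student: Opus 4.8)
The plan is to reformulate ``tangency to the symplectic leaves of $\omega_1$'' as a condition on the single Casimir of $\omega_1$ and then to track that condition through the quasitriviality presentation of $X_\epsilon$. In the scalar case $\omega_1=\delta'(x-y)=\theta\theta_1$ admits, up to a multiplicative constant, the unique Casimir $C=\int_{S^1}u\,dx$: indeed $\frac{\delta C}{\delta u}=1$ and $d_1 C=0$, and its symplectic leaves are the level sets $\{C=\mathrm{const}\}$. By formula \eqref{eq19.eq}, a local vector field $X=\xi\,\theta$ satisfies $[X,C]=\int_{S^1}\xi\,dx$, so $X$ is tangent to the leaves of $\omega_1$ precisely when $\int_{S^1}\xi\,dx=0$, i.e. when its component $\xi$ is a total $x$-derivative. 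Hence the statement to be proved becomes: every homogeneous component of $X_\epsilon$ integrates to zero if and only if $g(u)$ makes the pencil exact. I record at the outset the elementary but decisive fact that any $\omega_1$-Hamiltonian field $d_1F$ is automatically tangent, since $(d_1F)^1=\partial_x\frac{\delta F}{\delta u}$ by \eqref{eq21.eq}.

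First I would use quasitriviality to put $X_\epsilon$ into a normal form adapted to this question. By the Liu--Zhang result recalled around \eqref{qtvf}, the (a priori non-polynomial) deformation field can be written as $X_\epsilon=d_2A_\epsilon-d_1B_\epsilon$ for suitable functionals $A_\epsilon,B_\epsilon$. The summand $d_1B_\epsilon$ is tangent by the fact just recorded, so tangency of $X_\epsilon$ collapses to tangency of the $\omega_2$-Hamiltonian part $d_2A_\epsilon$.

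The heart of the argument is then a direct computation of the tangency defect of an $\omega_2$-Hamiltonian field. Writing $\omega_2=g(u)\,\delta'(x-y)+\tfrac12 g'(u)u_x\,\delta(x-y)$ and using \eqref{eq21.eq}, one gets $(d_2F)^1=g\,\partial_x\frac{\delta F}{\delta u}+\tfrac12 g'u_x\frac{\delta F}{\delta u}$, and integrating the first term by parts over $S^1$ yields the clean expression $[d_2F,C]=\int_{S^1}(d_2F)^1\,dx=-\tfrac12\int_{S^1}g'(u)\,u_x\,\frac{\delta F}{\delta u}\,dx$. If $g$ is affine, $g'=a=\mathrm{const}$, this equals $-\tfrac{a}{2}\int_{S^1}u_x\frac{\delta F}{\delta u}\,dx=0$, because $\int_{S^1}u_x\frac{\delta F}{\delta u}\,dx$ is the derivative of $F$ along $x$-translations and hence the integral of a total derivative; so $X_\epsilon$ is tangent. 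By Lemma~\ref{exactclassification} affineness of $g$ is exactly the exactness of $\omega_2-\lambda\omega_1$, which settles the ``if'' direction. For the converse I would argue contrapositively: if tangency held at all orders then $\int_{S^1}g'(u)u_x\frac{\delta A_\epsilon}{\delta u}\,dx=0$ order by order, and evaluating this at the lowest order at which the deformation is nontrivial, where $A_\epsilon$ carries the explicit leading density furnished by quasitriviality, forces $g'(u)$ to be constant, i.e. $g''=0$, so the pencil is exact.

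The hard part will be controlling the ambiguity in $X_\epsilon$. The field generating a given deformation $P_\epsilon=-d_1X_\epsilon$ is determined only modulo the bi-Hamiltonian cohomology: adding a term $d_2b$ with $d_1d_2b=0$ leaves $P_\epsilon$ unchanged but shifts the defect by $-\tfrac12\int_{S^1}g'u_x\frac{\delta b}{\delta u}\,dx$, which for non-affine $g$ need not vanish. I must therefore show that this freedom cannot be used to manufacture tangency in the non-exact case. Here I would invoke the Liu--Zhang computation that $H^2_k(\mathcal{L}(\mathbb{R}),\omega_1,\omega_2)=0$ for $k\neq 3$ and is generated by the logarithmic field \eqref{qtvf} for $k=3$, reducing the check to a single explicit class whose defect I compute and compare against the $g''$-obstruction. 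A secondary technical point is that $A_\epsilon$ and $B_\epsilon$ carry non-polynomial (rational and logarithmic) densities, so the integrations by parts above must be justified within the boundary conditions of the formal calculus of variations; this is routine but should be stated explicitly.
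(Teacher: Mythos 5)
Your reduction of tangency to the vanishing of $\int_{S^1}\xi\,dx$, the remark that $d_1$-Hamiltonian fields are automatically tangent, and the defect computation $[d_2F,C]=-\tfrac12\int_{S^1}g'u_x\frac{\delta F}{\delta u}\,dx$ (which vanishes for affine $g$) all agree with the paper. The genuine gap is the step where you write $X_\epsilon=d_2A_\epsilon-d_1B_\epsilon$ ``by the Liu--Zhang result''. The vanishing of the relevant bi-Hamiltonian cohomology applies only to vector fields $X$ satisfying $d_1d_2X=0$, and the full deformation field is not such a field: from $[\omega_2-d_1X_\epsilon,\,\omega_2-d_1X_\epsilon]=0$ one gets $d_2d_1X_\epsilon=\tfrac12[d_1X_\epsilon,d_1X_\epsilon]$, which is nonzero beyond leading order. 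Only the leading term $X^{(2)}_2$ is a bi-Hamiltonian cocycle. At higher orders the paper's iterative construction of the reducing quasi-Miura transformation yields, for instance, $X^{(4)}_2=d_1H_2^{(4)}-d_2K_2^{(4)}+\tfrac12[d_1K_2^{(2)},d_2K_2^{(2)}]$, and in general iterated commutators of Hamiltonian vector fields appear. Such commutators are in general Hamiltonian with respect to neither structure, so they are not absorbed into your two summands, and your argument says nothing about them. Closing this gap requires precisely the paper's extra lemma: if two translation-invariant vector fields are tangent to the symplectic leaves of $\omega_1$, so is their commutator, proved via $\int_{S^1}[X,Y]_0\,dx=\int_{S^1}\bigl(X_0\frac{\delta Y_0}{\delta u}-Y_0\frac{\delta X_0}{\delta u}\bigr)\,dx=0$. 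Without it, your ``if'' direction is incomplete.

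Two further points. First, your converse is asserted rather than proved: the claim that the lowest-order defect ``forces $g'(u)$ to be constant'' is the entire content of that implication, and along your route it means substituting the logarithmic density of \eqref{qtvf} into $\int_{S^1} g'u_x\frac{\delta A_\epsilon}{\delta u}\,dx$ and establishing non-vanishing of a non-polynomial functional, which you do not do. The paper stays polynomial instead: it takes Dubrovin's explicit second-order deformation of $2g\delta'+g'u_x\delta-\lambda\delta'$, imposes the tangency ansatz (which forces the deformation to equal $2Y\delta'''+3Y_x\delta''+Y_{xx}\delta'$), and observes that the coefficient of $\delta(x-y)$ on the left-hand side survives unless $g''=0$. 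Second, what you single out as ``the hard part'' --- the ambiguity $X_\epsilon\mapsto X_\epsilon+d_2b$ with $d_1d_2b=0$ --- is a non-issue: $d_1(d_2b)=0$ means $d_2b\in\ker d_1$, and in positive degree any vector field in $\ker d_1$ has component with vanishing variational derivative, hence a total $x$-derivative, hence tangent; so the defect shift you fear is identically zero and no appeal to the $H^2_k$ computation is needed. The real difficulty you located there in fact lives in the commutator terms described above.
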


\proof 
The tangency  of $X_{\epsilon}$ to the symplectic leaves of $\omega_1$ is equivalent to impose the following condition  
$$\int_{S^1} X_{\epsilon}\f{\delta C}{\delta u}\,dx=0$$
for all the Casimirs $C$ of $\omega_1$.  On the other hand, the Casimirs of a Poisson bracket of hydrodynamic type
 are the integrals of the flat coordinates of the metric defining the bracket. In the case 
 of $\omega_1$ we have only one Casimir given by
$$C=\int_{S^1}u\,dx$$
and the tangency condition reads
$$\int_{S^1} X_{\epsilon}\,dx=0$$
which is equivalent to 
$$X_{\epsilon}=\d_x F_{\epsilon}$$
for a suitable differential polynomial $F_{\epsilon}$. Without loss of generality we can assume
$$X_{\epsilon}=\sum_{k=1}^{\infty}\epsilon^{2k}X_2^{(2k)}.$$
By the quasitriviality of deformations, we have that the deformed pencil:
\beq\label{defpen}
P_2-\lambda \omega_1=\omega_{\lambda}+{\rm Lie}_{X_{\epsilon}}\omega_1=\omega_{\lambda}+\sum_{k=1}^{\infty}\epsilon^{2k}P_2^{(2k)}
=\omega_{\lambda}+\sum_{k=1}^{\infty}\epsilon^{2k}{\rm Lie}_{X_2^{(2k)}}\omega_1
\eeq
can be reduced to its  dispersionless
 limit by iterating quasi Miura transformations (here $\omega_{\lambda}=\omega_2-\lambda\omega_1$). Following \cite{LZ2} we show how to construct such transformations.
 This will give us a crucial piece of information on the vector fields $X_2^{(2k)}$.

By hypothesis the vector field  $X^{(2)}_2$ satisfies the condition
$$d_1 d_2 X^{(2)}_2=0.$$
Moreover we have seen that it can be written as
\beq\label{lincomb1}
X^{(2)}_2=d_1 H_2^{(2)}-d_2 K_2^{(2)}
\eeq
for two suitable functionals $H^{(2)}_2$ and $K^{(2)}_2$. 
Let us consider the quasi Miura transformation generated by the vector field 
$$\epsilon^2 Z^{(2)}_2=-\epsilon^2 d_1 K_2^{(2)}.$$
Since 

\begin{eqnarray*}
{\rm Lie}_{Z^{(2)}_2}\omega_1&=&0\\
{\rm Lie}_{Z^{(2)}_2}\omega_2&=&-P^{(2)}_2=-{\rm Lie}_{X^{(2)}_2}\omega_1.
\end{eqnarray*}
this transformation does not modify
 $\omega_1$ while 
\begin{eqnarray*}
&&P_2\to\tilde{P}_2=P_2+\sum_{k=1}^{\infty}\f{\epsilon^{2k}}{k!}{\rm Lie}_{Z^{(2)}_2}^k P_2=\\
&&\omega_2+\epsilon^{4}\left(P^{(4)}_2+{\rm Lie}_{Z^{(2)}_2}P^{(2)}_2+\f{1}{2}{\rm Lie}_{Z^{(2)}_2}^2\omega_2\right)+\\
&&\epsilon^{6}\left(P^{(6)}_2+{\rm Lie}_{Z^{(2)}_2}P^{(4)}_2+\f{1}{2}{\rm Lie}_{Z^{(2)}_2}^2 P_2^{(2)}+\f{1}{6}{\rm Lie}_{Z^{(2)}_2}^3\omega_2\right)+\mathcal{O}(\epsilon^8)=\\
&&\omega_2+\epsilon^{4}\left(P^{(4)}_2-\f{1}{2}{\rm Lie}_{Z^{(2)}_2}^2\omega_2\right)
+\epsilon^{6}\left(P^{(6)}_2+{\rm Lie}_{Z^{(2)}_2}P^{(4)}_2-\f{1}{3}{\rm Lie}_{Z^{(2)}_2}^3\omega_2\right)
+\mathcal{O}(\epsilon^8)
\end{eqnarray*}
Notice that, using twice graded Jacobi identity, the second term of order $\mathcal{O}(\epsilon^4)$
 can be written as
$${\rm Lie}_{Z^{(2)}_2}^2\omega_2=[d_1 K^{(2)}_2,[d_1 K^{(2)}_2,\omega_2]]=-
[d_1 K^{(2)}_2,[d_2 K^{(2)}_2,\omega_1]]={\rm Lie}_{[d_1 K^{(2)}_2,d_2 K^{(2)}_2]}\omega_1$$
and therefore we can write
$$\tilde{P}_2=\omega_2+{\rm Lie}_{\tilde{X}^{(4)}_2}\omega_1+\mathcal{O}(\epsilon^6)$$
with
\beq\label{lincomb2}
\tilde{X}^{(4)}_2=X^{(4)}_2-\f{1}{2}[d_1 K^{(2)}_2,d_2 K^{(2)}_2]
\eeq
Moreover
$$d_1 d_2 \tilde{X}^{(4)}_2=0.$$
According to the main result of \cite{LZ2} (extended to the non scalar case in \cite{DLZ}) 
this implies that
$$\tilde{X}^{(4)}_2=d_1 H_2^{(4)}-d_2 K_2^{(4)}.$$

Let us consider now the quasi Miura transformation generated by the vector
 field 
$$\epsilon^4 Z^{(4)}_2=-\epsilon^4 d_1 K_2^{(4)}.$$ 
Since
\begin{eqnarray}
\label{Z4uno}
{\rm Lie}_{Z^{(4)}_2}\omega_1&=&0\\
\label{Z4due}
{\rm Lie}_{Z^{(4)}_2}\omega_2&=&-\left(P^{(4)}_2-\f{1}{2}{\rm Lie}_{Z^{(2)}_2}^2\omega_2\right)=-{\rm Lie}_{\tilde{X}^{(4)}_2}\omega_1,
\end{eqnarray}
this transformation does not modify $\omega_1$ while 
$$\tilde{P}_2\to\tilde{\tilde{P}}_2=
\omega_2+\epsilon^6\left(P^{(6)}_2
+{\rm Lie}_{Z^{(4)}_2}P^{(2)}_2+{\rm Lie}_{Z^{(2)}_2}P^{(4)}_2-\f{1}{3}{\rm Lie}_{Z^{(2)}_2}^3\omega_2\right)+\dots$$
Notice that, using twice graded Jacobi identity, the second and the third term in the term of order $\mathcal{O}(\epsilon^6)$ 
 can be written as
\begin{eqnarray*}
{\rm Lie}_{Z^{(4)}_2}{\rm Lie}_{Z^{(2)}_2}\omega_2&=&[d_1 K^{(4)}_2,[d_1 K^{(2)}_2,\omega_2]]=
-[d_1 K^{(4)}_2,[d_2 K^{(2)}_2,\omega_1]]={\rm Lie}_{[d_1 K^{(4)}_2,d_2 K^{(2)}_2]}\omega_1\\
{\rm Lie}_{Z^{(2)}_2}{\rm Lie}_{Z^{(4)}_2}\omega_2&=&[d_1 K^{(2)}_2,[d_1 K^{(4)}_2,\omega_2]]=
-[d_1 K^{(2)}_2,[d_2 K^{(4)}_2,\omega_1]]={\rm Lie}_{[d_1 K^{(2)}_2,d_2 K^{(4)}_2]}\omega_1\\
{\rm Lie}_{Z^{(2)}_2}^3\omega_2&=&[d_1 K^{(2)}_2,[[d_1 K^{(2)}_2,d_2 K^{(2)}_2],\omega_1]]=
-[[d_1 K^{(2)}_2,[d_1 K^{(2)}_2,d_2 K^{(2)}_2]],\omega_1]
\end{eqnarray*}
and therefore we can write
$$\tilde{\tilde{P}}_2=\omega_2+{\rm Lie}_{\tilde{X}^{(6)}_2}\omega_1+\mathcal{O}(\epsilon^8)$$
with
$$
\tilde{\tilde{X}}^{(6)}_2=X^{(6)}_2+[d_1 K^{(2)}_2,d_2 K^{(4)}_2]+[d_1 K^{(4)}_2,d_2 K^{(2)}_2]
-\f{1}{3}[[d_1 K^{(2)}_2,[d_1 K^{(2)}_2,d_2 K^{(2)}_2]].
$$
Moreover
$$d_1 d_2 \tilde{\tilde{X}}^{(6)}_2=0.$$
Again this implies that
$$\tilde{\tilde{X}}^{(6)}_2=d_1 H_2^{(6)}-d_2 K_2^{(6)}$$
The quasi Miura transformation generated by the vector
 field 
$$\epsilon^6 Z^{(6)}_2=-\epsilon^6 d_1 K_2^{(6)}$$
reduces the pencil to the form 
$$\tilde{\tilde{P}}_\lambda\to\tilde{\tilde{\tilde{P}}}_\lambda=
\omega_\lambda+\mathcal{O}(\epsilon^{8}).$$
The higher terms can be treated in a similar way. This is the procedure to construct the quasi-Miura transformation reducing the pencil to its dispersionless limit presented in \cite{LZ2,DLZ}. What is important for our pourposes is that the vector fields $X_2^{(2k)},\,k=4,5,\dots$ can always be written as linear combination of Hamiltonian vector fields (w.r.t. $\omega_1$ or $\omega_2$) and commutators (or iterated commutators) of Hamiltonian vector fields. 
 For instance
$$X^{(4)}_2=d_1 H_2^{(4)}-d_2 K_2^{(4)}+\f{1}{2}[d_1 K^{(2)}_2,d_2 K^{(2)}_2]$$ 
and
$$X^{(6)}_2=d_1 H_2^{(6)}-d_2 K_2^{(6)}-[d_1 K^{(2)}_2,d_2 K^{(4)}_2]-[d_1 K^{(4)}_2,d_2 K^{(2)}_2]
+\f{1}{3}[[d_1 K^{(2)}_2,[d_1 K^{(2)}_2,d_2 K^{(2)}_2]].$$

To conclude the first part of the proof we have to show that
\begin{itemize}
\item If two translation invariant vector fields are tangent to the symplectic leaves of $\omega_1$ the same
 is true for their commutator. 
\item  If $g(u)= au+b$ then the vectors fields $d_1 H$ and $d_2 K$ are tangent to the symplectic leaves of $\omega_1$ for any choice of the functionals $H$ and $K$.
\end{itemize}
Concerning the first point it follows immediately by the formula of commutator of two translation invariant vector
 fields 
$$X=X_0\f{\d}{\d u}+\sum_{k=1}^{\infty}\d_x^k X_0\f{\d}{\d u_{(k)}}$$
and 
$$Y=Y_0\f{\d}{\d u}+\sum_{k=1}^{\infty}\d_x^k Y_0\f{\d}{\d u_{(k)}}.$$
Indeed, intregrating by parts we have
\begin{eqnarray*}
\int_{S^1}[X,Y]_0\,dx&=&\int_{S^1}\sum_{k=0}^{\infty}\left[\d_x^k X_0\f{\d Y_0}{\d u_{(k)}}-\d_x^k Y_0\f{\d X_0}{\d u_{(k)}}\right]\,dx=\\
&&\int_{S^1}\left[X_0\sum_{k=0}^{\infty}(-1)^k\d_x^k\f{\d Y_0}{\d u_{(k)}}-Y_0\sum_{k=0}^{\infty}(-1)^k\d_x^k\f{\d X_0}{\d u_{(k)}}\right]\,dx=\\
&&\int_{S^1}\left[X_0\f{\delta Y_0}{\delta u}-Y_0\f{\delta X_0}{\delta u}\right]\,dx
\end{eqnarray*}
that vanishes since, by hypothesis 
$$\f{\delta Y_0}{\delta u}=\f{\delta X_0}{\delta u}=0.$$
\newline
Concerning the second point we observe that the tangency of $d_1 H$ is trivial while the tangency of $d_2 K$
 can be easily checked by straightforward computation:
\begin{eqnarray*}
&&\int_{S^1}\left(2g(u)\d_x+\frac{\partial g}{\partial u}u_x\right)\f{\delta K}{\delta u}\,dx=
\int_{S^1}\left[\d_x\left(2g(u)\f{\delta K}{\delta u}\right)-\frac{\partial g}{\partial u}u_x\f{\delta K}{\delta u}\right]\,dx=\\
&&-\int_{S^1}\sum_{k=0}^{\infty}(-1)^k\d^k_x\left(\f{\d K}{\d u^{(k)}}\right)u_x\frac{\partial g}{\partial u}\,dx.
\end{eqnarray*}
Indeed if 
$g(u)=au+b$, using repeated integration by parts, we have 
\begin{eqnarray*}
&&\int_{S^1}\sum_{k=0}^{\infty}(-1)^k\d^k_x\left(\f{\d K}{\d u^{(k)}}\right)u_x\frac{\partial g}{\partial u}\,dx
=\int_{S^1}\sum_{k=0}^{\infty}(-1)^k\d^k_x\left(\f{\d K}{\d u^{(k)}}\right)u_xa\,dx=\\ 
&&=a\int_{S^1}\sum_{k=0}^{\infty}\left(\f{\d K}{\d u^{(k)}}\right)u^{(k+1)}\,dx=a\int_{S^1}\partial_x(K)\,dx=0.
\end{eqnarray*} 
\newline
\newline
To conclude the proof we show that if $g(u)\ne au+b$ the vector field $X_2^{(2)}$ is no longer tangent
 to the symplectic leaves of $\omega_1$. Indeed, the deformations of the pencil
$$2g(u)\delta'(x-y)+g'\,u_x\delta(x-y)-\lambda\delta'(x-y),$$
up to the second order, are given by the following formula (see \cite{D} where also fourth order deformations are computed)
\begin{eqnarray*}
&&2g(u)\delta'(x-y)+g'\,u_x\delta(x-y)-\lambda\delta'(x-y)+\\
&&+\epsilon^2\left\{
\frac{cg'}{4}\,\delta'''(x-y)+\frac{3}{8}(cg')' u_x\,\delta''(x-y)+\right.\\
&&\left.+\left[\left(\frac{c''g'}{8}+\frac{c'g''}{3}+\frac{5cg'''}{24}\right)u_x^2 +
\left(\frac{c'g'}{8}+\frac{7cg''}{24}\right)u_{xx}\right]\delta'(x-y)+\right.\\
&&\left.+\left[\left(\frac{c''g''}{24}+\frac{c'g'''}{12}+\frac{cg^{(4)}}{24}\right)u_x^3 +
\frac{1}{6}(c'g'' + cg''') u_x u_{xx} +\frac{cg''}{12}u_{xxx}\right]\delta(x-y)\right\}\\
\end{eqnarray*}
The term in the bracket $\{\dots\}$ should be equal to 
$${\rm Lie}_{(Y(u)u_{xx}+Z(u)u_x^2)}\delta'(x-y)$$ 
for a suitable choice of $Y$ and $Z$. Taking into account the tangency condition which implies $Z=\frac{\partial Y}{\partial u}$, we would obtain
$$\{\dots\}=2Y\delta'''(x-y)+3Y_x\delta''(x-y)+Y_{xx}\delta'(x-y).$$
But this is impossibile due to the presence, in the left hand side, of the term
$$\left[\left(\frac{c''g''}{24}+\frac{c'g'''}{12}+\frac{cg^{(4)}}{24}\right)u_x^3 +
\frac{1}{6}(c'g'' + cg''') u_x u_{xx} +\frac{cg''}{12}u_{xxx}\right]\delta(x-y)$$
vanishing only if $g''=0$.


\endproof

\section{Deformations up to the eighth order}

In this section we compute deformations up to the eighth order and we show that there are no obstructions to the existence of  a polynomial  deformation up to that order. Such a result has been obtained taking full advantage of the computational capabilities of {\tt Maple} and of the statement of Theorem 16. 
\begin{theorem}
Up to the eighth order the deformations of the pencil
$$2u\delta'(x-y)+u_x\delta(x-y)-\lambda\delta'(x-y)$$
are unobstructed and 
can be reduced to the following form 
$$P^{\epsilon}_{\lambda}=\omega_2-\lambda\omega_1-\sum_{k=1}^4\epsilon^{2k} d_1 ( F_{2k})+\mathcal{O}(\epsilon^{10})$$
where the vector field $X_{\epsilon}=\sum_{k=1}^4\epsilon^{2k} ( F_{2k})$ generating the deformation has homogenous components given by 
\begin{eqnarray*}
F_{2}&=&\d^2_x f_2\\
F_{4}&=&\d^4_x f_4\\
F_{6}&=&\d_x\left(t_0 u_{xxxxx}+t_1 u_{xxxx}u_x+t_2 u_{xxx}u_{xx}+t_3 u_{xxx}u_x^2+t_4 u^2_{xx}u_x
 +t_5 u_{xx}u_{x}^3+\right.\\
&&\left.+t_6 u_x^5\right)\\
F_{8}&=&\d_x\left(r_0 u_{xxxxxxx}+r_1 u_{xxxxxx}u_x+r_2 u_{xxxxx}u_{xx}+r_3 u_{xxxxx}u_x^2+r_4 u_{xxxx}u_{xxx} +\right.\\
&&\left.+r_5 u_{xxxx}u_{xx}u_x+r_6 u_{xxxx}u_x^3+r_7 u_{xxx}^2 u_x+r_8 u_{xxx}u_{xx}^2+r_9 u_{xxx}u_{xx}u_x^2+\right.\\
&&\left.+r_{10} u_{xxx}u_x^4+r_{11} u_{xx}^3u_x+r_{12} u_{xx}^2 u_x^3+r_{13} u_{xx}u_x^5+r_{14} u_x^7\right)
\end{eqnarray*}
where
$$\f{\d f_4}{\d u}=-\f{\d}{\d u}\left(\f{\d f_2}{\d u}\right)^2,$$
\begin{eqnarray*}
t_0:=-\frac{1}{2}\, \left( {\frac{\d f_2}{\d u}} \right) ^{2}{\frac {\d^{3}f_2}{\d{u}^{3}}}
  - \left( {\frac {\d^{2}f_2}{\d{u}^{2}}}   \right) ^{2}{\frac {\d f_2}{\d u}},
\end{eqnarray*}
\begin{eqnarray*}
t_1:=\frac{1}{2}\,{\it t_2}  -\frac{1}{4}\, \left( {\frac {\d^{2}f_2}{\d
{u}^{2}}}   \right) ^{3}-\frac{3}{8}\,
 \left( {\frac {\d f_2}{\d u}}  
 \right) ^{2}{\frac {\d^{4}f_2}{\d{u}^{4}}}  -{\frac {19}{12}}\,\frac{\d f_2}{\d u}\,\frac {\d^{3}f_2}{\d{u}^{3}}\,\frac{\d^{2}f_2}{\d u^2},  
\end{eqnarray*}
\begin{eqnarray*}
t_3:&=&\frac{5}{6}\,t_4  -\frac{1}{6}\,{\frac {\d t_1}{\d u}}-\frac{1}{6}\,{\frac {\d t_2}{\d u}} -4\, \left( {\frac {\d^{2}f_2}{\d{u}^{2}}}   \right) ^{2}{\frac {\d^{3}f_2}{
\d{u}^{3}}}+\\  
&&-{\frac {23}{8}}\,\frac {\d f_2}{\d u} \, 
 \left( {\frac {\d^{3}f_2}{\d{u}^{3}}}\right) ^{2}-{\frac {9}{16}}\, \left( {\frac {\d f_2}{\d u}}   \right) ^{2}{\frac {\d^{5}f_2}{\d{
u}^{5}}}  -\frac{7}{2}\, 
\frac {\d f_2}{\d u}\,\frac {\d^{4}f_2}{\d{u}^{4}}\, \frac {\d^{2}f_2}{\d{u}^{2}}, 
\end{eqnarray*}
\begin{eqnarray*}
r_0:=-\frac{1}{6}\, \left( {\frac {\d f_2}{\d u}}\right)^{3}{\frac {\d^{4}f_2}{\d{u}^{4}}}
  - \left( {\frac {\d^{2}f_2}{\d{u}^{2}}}\right) ^{3}{\frac {\d f_2}{\d u}}  -\frac{3}{2}\, \frac {\d^{2}f_2}{\d{u}^{2}}\,  \left( {\frac{\d f_2}
{\d u}}\right) ^{2}{\frac {\d^{3}f_2}{\d{u}^{3}}}, 
\end{eqnarray*}
\begin{eqnarray*}
r_1&:=&-{\frac {205}{468}}\, \frac {\d^{2}f_2}{\d{u}^{2}}\,\left( {\frac {\d f_2}{\d u}} \right) ^{2}{\frac {\d^{4}f_2}{\d{u}^{4}}}  -\frac{1}{12}\, \left( {\frac {\d f_2}{\d
u}}   \right) ^{3}{\frac {\d^{5
}f_2}{\d{u}^{5}}}+\\  
&&-{\frac {83}{
234}}\, \left( {\frac {\d^{2}f_2}{\d{u}^{2}}} \right) ^{2} \,\frac {\d f_2}{\d u}\, {\frac {\d^{3}f_2}{\d{u}^{3}}}+
\frac{1}{26}\, \left( {\frac {\d^{2}f_2}{\d{u}^{2}}}   \right) ^{4}-{\frac {107
}{234}}\, \left( {\frac {\d^{3}f_2}{\d{u}^{3}}} \right) ^{2} \left( {\frac {\d f_2}{\d u}}
   \right) ^{2}+\\
&&+{\frac {16}{13}}\,{\it r_2}
  -{\frac {7}{13}}\,{\it r_4}  -{\frac {2}{39}}\,\frac {\d t_1}{\d u}\, {\frac {\d f_2}{\d u}}
-{\frac {2}{39}}\,\frac {\d t_2}{\d u}\, {\frac {\d f_2}{\d u}}
  +{\frac {2}{39}}\,t_4 {\frac {\d f_2}{\d u}}, 
\end{eqnarray*}
\begin{eqnarray*}
r_3&:=&\frac{7}{9}\,r_5 -\frac{773}{81}\,
 \frac{\d^{3}f_2}{\d{u}^{3}}\,\left(\frac {\d f_2}{\d u}\right)^{2}\,\frac {\d^{4}f_2}{\d{u}^{4}}-\frac {15715}{1944}\,\left(\frac{\d^2
 f_2}{\d{u}^{2}}\right) ^{3}\,
\frac{\d^{3}f_2}{\d{u}^{3}}\\
&&-\frac {33163}{7776}\,\frac{\d^{2}f_2}{\d{u}^{2}} \, \left(\frac{\d f_2}{\d u}\right)^{2}\frac{\d^{5}f_2}{\d{u}^{5}}-\frac{7}{9}\,r_7 -\frac{8}{27}\,\frac{\d r_2}{\d u}-\frac {35}{81}\,t_4\frac{\d^{2}f_2}{\d{u}^{2}}+\\
&&+\frac{7}{81}\,\frac{\d t_4}{\d u}\,\frac{\d f_2}{\d u}+\frac{35}{81}\,\frac{\d t_1}{\d u}\,\frac {\d^{2}f_2}{\d{u}^{2}}+\frac{35}{81}\,\frac{\d t_2}{\d u}\,\frac {\d^{2}f_2}{\d{u}^{2}}
-\frac{7}{81}\,\frac {\d^{2}t_1}{\d{u}^{2}}\,\frac{\d f_2}{\d u}+\\
&&-\frac{7}{81}\,\frac {\d^{2}t_2}{\d{u}^{2}}\,\frac{\d f_2}{\d u}-\frac{20249}{972}\, 
\frac{\d^{2}f_2}{\d{u}^{2}}\,
\frac{\d f_2}{\d u}\,
\left(\frac{\d^{3}f_2}{\d{u}^{3}}\right)^{2}
-\frac{97}{8}\, 
\left(\frac{\d^{2}f_2}{\d{u}^{2}}\right) ^{2}\,\frac {\d f_2}{\d u}\,\frac {\d^{4}f_2}{\d{u}^{4}}+\\
 &&-\frac {1003}{2592}\,
 \left(\frac{\d f_2}{\d u}\right) ^{3}\,\frac {\d^{6}f_2}{\d{u}^{6}} ,
\end{eqnarray*}
\newpage
\begin{eqnarray*}
r_4&:=&\frac {341}{108}\, \left(\frac {\d^{3}f_2}{\d{u}^{3}}\right) ^{2}\, 
\left(\frac {\d f_2}{\d u}\right) ^{2}+\frac{5}{3}\,r_2 +\frac {128}{27}\,\frac {\d^{2}f_2}{\d{u}^{2}} \,
\left( \frac{\d f_2}{\d u}\right)^{2}\,\frac{\d^{4}f_2}{\d{u}^{4}}+\\  
&&+\frac {65}{144}\, \left(\frac {\d f_2}{\d u}\right)^{3}\,\frac{\d^{5}f_2}{\d{u}^{5}}
+\frac {1175}{108}\,\left(\frac {\d^{2}f_2}{\d{u}^{2}}\right)^{2}\,
\frac {\d f_2}{\d u}\,\frac{\d^{3}f_2}{\d{u}^{3}}+ \left(\frac{\d^{2}f_2}{\d{u}^{2}}\right) ^{4}+\\
&&+\frac{1}{9}\,\frac{\d t_1}{\d u}\,\frac{\d f_2}{\d u}+\frac{1}{9}\,\frac{\d t_2}{\d u}\,\frac{\d f_2}{\d u}
 -\frac{1}{9}\,t_4\,\frac {\d f_2}{\d u},  
\end{eqnarray*}
\begin{eqnarray*}
r_5&:=&\frac {568}{45}\,\frac{\d^{3}f_2}{\d{u}^{3}}\, 
\left(\frac {\d f_2}{\d u}\right)^{2}\,\frac {\d^{4}f_2}{\d{u}^{4}} 
 +{\frac {5107}{540}}\,
 \left( {\frac {\d^{2}f_2}{\d{u}^{2}}}\right) ^{3}\,\frac{\d^{3}f_2}{\d{u}^{3}}+\\
  &&+\frac{991}{216}\,\frac{\d^{2}f_2}{\d{u}^{2}}\,
\left(\frac{\d f_2}{\d u}\right)^{2}\frac{\d^{5}f_2}{\d{u}^5} +\frac {11}{5}\,r_7
-\frac{3}{5}\,r_8
  +\frac{2}{3}\,{\frac {\d r_2}{\d u}}
  +{\frac {101}{90}}\,t_4\frac {\d^{2}f_2}{\d{u}^{2}}+\\
&&-{\frac{2}{45}}\,\frac {\d t_4}{\d u}\,{\frac {\d f_2}{\d u}}
 -{\frac {101}{90}}\,\frac {\d t_1}{\d u}\,{\frac {\d^{2}f_2}{\d{u}^{2}}}  
-{\frac {101}{90}}\,\frac {\d t_2}{\d u}\,{\frac {\d^{2}f_2}{\d{u}^{2}}}  
+{\frac {2}{45}}\,\frac {\d^{2}t_1}{\d{u}^{2}}\,{\frac {\d f_2}{\d u}}\\
&&+{\frac {2}{45}}\,\frac {\d^{2}t_2}{\d{u}^{2}}\,{\frac {\d f_2}{\d u}}
+{\frac {29279}{1080}}\,\frac{\d^{2}f_2}{\d{u}^{
2}}\,\frac {\d f_2}{\d u}\, 
\left( {\frac {\d^{3}f_2}{\d{u}^{3}}}\right)^{2}+
{\frac {247}{20}}\, 
\left(\frac {\d^{2}f_2}{\d{u}^{2}}\right) ^{2}\frac{\d f_2}{\d u}\,{\frac {\d^{4}f_2}{\d{u}
^{4}}}+\\  
&&+{\frac {1673}{3600}}\,
 \left( {\frac {\d f_2}{\d u}}\right) ^{3}\,{\frac {\d^{6}f_2}{\d{u}^{6}}},  
\end{eqnarray*}
\begin{eqnarray*}
r_6&:=&{\frac {10801}{972}}\, \left( {\frac {\d^{4}f_2}{\d{u}^{4}}}\right) ^{2} \left( {\frac {\d f_2}{\d u}
}   \right) ^{2}+{\frac {55271}{5832}}
\, \left( {\frac {\d^{2}f_2}{\d{u}^{2}}} \right) ^{3}{\frac {\d^{4}f_2}{\d{u}^{4}}}+\\
 &&+{\frac {286849}{23328}}\, \left( {\frac {\d^
{2}f_2}{\d{u}^{2}}}\right) ^{2}\,\frac {\d f_2}{\d u}\,{\frac {\d^{5}f_2}{\d{u}^{5}}}+{\frac {698971}{46656}}\, \frac {\d^{3}f_2}{\d{u}
^{3}}\,\left( {\frac {\d f_2}{\d u}}\right)^{2}{\frac {\d^{5}f_2}{\d{u}^{5}}}+\\
&&+{\frac {578641}
{11664}}\, \left( {\frac{\d^{2}f_2}{\d{u}^{2}}} \right) ^{2} \left( {\frac {\d^{3}f_2}{\d{u}^{3}}} \right) ^{2}+{\frac {39635}{1458}}\,
 \left( {\frac{\d^{3}f_2}{\d{u}^{3}}}\right)^{3}{\frac{\d f_2}{\d u}}+r_9+\\
&&+{\frac {99689}{23328}}\,
\frac {\d^{2}f_2}{\d{u}^{2}}\,\left( {\frac {\d f_2}{\d u}}\right)^{2}{\frac {\d^{6}f_2}{\d{u}^{6}}}
  -\frac{5}{3}\,r_{11} -
\frac {26}{27}\,{\frac {\d r_5}{\d u}} +{\frac {19}{54}}\,{\frac {\d r_7}{\d u}}+\\
&&+\frac{5}{9}\,{\frac {\d r_8}{\d u}}+{\frac{4}{3}}\,\frac {\d t_5}{\d u}\,{\frac {\d f_2}{\d u}}
-\frac{20}{3}\,t_6\frac {\d f_2}{\d u}+
{\frac {70}{
81}}\,{\frac {\d^{2}r_2}{\d{u}^{2}}}+\\
&&-{\frac {383}{486}}\,\frac {\d t_1}{\d u}\,{\frac {\d^{3}f_2}{\d{u}^{3}}}  
-{\frac {383}{486}}\,\frac {\d t_2}{\d u}\,{\frac {\d^{3}f_2}{\d{u}^{3}}}  
-{\frac {679}{486}}\,\frac {\d^{2}t_1}{\d{u}^{2}}\,\frac {\d^{2}f_2}{\d{u}^{2}}+\\ 
&&-{\frac {679}{486}}\,\frac {\d^{2}t_2}{\d{u}^{2}}\,\frac {\d^{2}f_2}{\d{u}^{2}} 
+\frac {14}{243}\,\frac {\d^{3}t_1}{\d{u}^{3}}\,{\frac {\d f_2}{\d u}}
+\frac {14}{243}\,\frac {\d^{3}t_2}{\d{u}^{3}}\,{\frac {\d f_2}{\d u}}+\\
&&+{\frac {383}{486}}\,t_4\, {\frac {\d^{3}f_2}{\d{u}^{3}}} 
+{\frac {679}{486}}\,\frac {\d t_4}{\d u}\,{\frac {\d^{2}f_2}{\d{u}^{2}}} 
-{\frac {222}{243}}\,\frac{\d^{2}t_4}{\d{u}^{2}}\, {
\frac {\d f_2}{\d u}}+\\
&&+{\frac{
502015}{5832}}\,\frac{\d^{3}f_2}{\d{u}^{3}}\,\frac {\d f_2}{\d u}\,\frac {\d^{4}f_2}{\d{u}^{4}}\, {\frac {\d^{2}f_2}
{\d{u}^{2}}}+{\frac {5983}{
15552}}\, \left( {\frac {\d f_2}{\d u}}\right)^{3}{\frac {\d^{7}f_2}{\d{u}^{7}}},
 \end{eqnarray*} 
\begin{eqnarray*}
r_{10}&:=&+{\frac {20}{39}}\,\frac {\d t_1}{\d u}\,{\frac {\d^{4}f_2}{\d{u}^{4}}}
+{\frac {20}{39}}\,\frac {\d t_2}{\d u}\,{\frac {\d^{4}f_2}{\d{u}^{4}}}
  -{\frac {173507}{2808}}\,\frac {\d^{
3}f_2}{\d{u}^{3}}\,\frac {\d f_2}{\d u}\,\frac {\d^{5}f_2}{\d{u}^{5}}\,{\frac {\d^{2}f_2}{\d{u}^{2}}}+\\
 &&-{\frac {20}{39}}\,t_4\,{\frac {\d^{4}f_2}{\d{u}^{4}}} -{\frac {209}{1560}}\, \left( {\frac {\d f_2}{\d u}}
   \right) ^{3}{\frac {\d^{8}f_2}{\d{u}^{8}
}}-{\frac {12811}{312}}\, \left( {
\frac {\d^{3}f_2}{\d{u}^{3}}}  
 \right) ^{3}{\frac {\d^{2}f_2}{\d{u}^{2}}}+\\ 
&&+{\frac {5}{3}}\,\frac {\d^{2}f_2}{\d{u}^{2}}\,{\frac {\d t_5}{\d u}}
-{\frac {5}{9}}\,\frac {\d^{2}f_2}{\d{u}^{2}}\,{\frac {\d^2 t_4}{\d u^2}}
-{\frac {9907}{468}}\, \left( {\frac {\d^{
2}f_2}{\d{u}^{2}}}   \right) ^{3}{
\frac {\d^{5}f_2}{\d{u}^{5}}}  
+{\frac {35}{39}}\,\frac {\d^{2}t_5}{\d{u}^{2}}\, {\frac {\d f_2}{\d u}}
+\\
&& -\frac{25}{3}\,\frac {\d^{2}f_2}{\d{u}^{2}}\,t_6
-{\frac {5}{78}}\, \frac {\d t_4}{\d u}\,{\frac {\d^{3}f_2}{\d{u}^{3}}} -{\frac {66479}{702}}\,
\frac {\d^{3}f_2}{\d{u}^{3}}\, \left( {\frac {\d^{2}f_2}{\d{u}^{2}}}\right) ^{2}{\frac {\d^{4}f_2}{\d{u}^{4}}}\\
&&  -{\frac {32501}{2340}}\, \left( {\frac {\d^{2}f_2}{\d{u}^{2}}}   \right) ^{2}\,
 \frac {\d f_2}{\d u} \,
  {\frac {\d^{6}f_2}{\d{u}^{6}}} -{\frac {1323}{520}}\,\frac {\d^{2}f_2}{\d{u}^{2}
}\, \left( {\frac {\d f_2}{\d u}} \right) ^{2}{\frac {\d^{7}f_2}{\d{
u}^{7}}}+\\  
&&+{\frac {5}{78}}\,\frac {\d^{2}t_1}{\d{u}^{2}}\,{\frac {\d^{3}f_2}{\d{u}^{3}}}
+{\frac {5}{78}}\,\frac {\d^{2}t_2}{\d{u}^{2}}\,{\frac {\d^{3}f_2}{\d{u}^{3}}}
-{\frac {7353}{1040}}\,\frac {\d^{3}f_2}{\d{u}^{3}}\, \left( {\frac {\d f_2}
{\d u}}\right) ^{2}{\frac {\d^{6}f_2}{\d{u}^{6}}}+\\
&&-{\frac {557}
{52}}\,\frac {\d^{4}f_2}{\d{u}^{4}}\,  \left( {\frac {\d f_2}{\d u}}\right) ^{2}{\frac {\d^{5}f_2}{\d{u}^{5}}}-{\frac {12094}{351}}\, \left( {\frac {\d^{4}f_2}{\d{u}^{4}}}   \right) ^{2}
 \,\frac {\d f_2}{\d u}\, {\frac {\d^{2}f_2}{\d{u}^{2}}}\\
&&-{\frac {35393}{702}}\, \left( {\frac {\d^{3}f_2}{\d{u}^{3
}}}   \right) ^{2} \,\frac {\d f_2}{\d
u}\, {\frac {\d^{4}f_2}{\d
{u}^{4}}}  +{\frac {17}{26}}\,
{\frac {\d r_9}{\d u}}  -{\frac {55}{
39}}\,{\frac {\d r_{11}}{\d u}} +{
\frac {5}{13}}\,r_{12}+\\  
&&-{\frac {35}{117}}\,\frac {\d^{3}t_4}{\d{u}^{3}}\,{\frac {\d f_2}{\d u}}+{\frac {14}{13}}\,{\frac {\d^{2}r_8}{\d{u}^{2}}}  -{\frac {20}{13}}\,{
\frac {\d^{2}r_7}{\d{u}^{2}}}-\frac {175}{39}\,\frac {\d t_6}{\d u}\, {\frac {\d f_2}{\d u}}. 
\end{eqnarray*}
All the remaining parameters, namely $t_2,t_4,t_5,t_6$ and $r_2,r_7,r_8,r_9,r_{11},r_{12},r_{13},r_{14}$ are free and can be chosen arbitrarily.
\end{theorem}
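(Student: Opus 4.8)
The plan is to exploit the exactness of the undeformed pencil together with Theorem 16 to rigidly constrain the admissible generating vector fields, and then to pin down their coefficients by imposing the Jacobi identity order by order. Since $\omega_2-\lambda\omega_1=2u\delta'(x-y)+u_x\delta(x-y)-\lambda\delta'(x-y)$ is exact (it is the case $a=2$, $b=0$ of Lemma \ref{exactclassification}), Theorem 16 guarantees that the quasitriviality vector field $X_\epsilon$ is tangent to the symplectic leaves of $\omega_1=\delta'(x-y)$; equivalently, each homogeneous component of $X_\epsilon$ is a total $x$-derivative. Moreover, as observed for the lower orders in Section 5, the odd-order components can be eliminated by a Miura transformation, so it suffices to look for $X_\epsilon=\sum_{k\ge1}\epsilon^{2k}F_{2k}$ with every component $F_{2k}=\partial_x(\,\cdot\,)$ a differential polynomial. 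This tangency requirement is precisely the set of \emph{linear} relations among the unknown coefficients: it forces the displayed structural form of $F_2,\dots,F_8$ and fixes the ansatz up to finitely many functional parameters at each order (the single function $f_2(u)$ at order two, then the families $\{t_i\}$ and $\{r_j\}$ at orders six and eight).

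First I would write down, for each even order $2k\le 8$, the most general homogeneous differential polynomial of degree $2k$ that is a total $x$-derivative, obtaining the stated expressions for $F_2,\dots,F_8$ with coefficient functions of $u$ still to be determined. Setting $P_{2k}:=-d_1X_{2k}$, where $X_{2k}$ is the evolutionary vector field with leading component $F_{2k}$, the condition $[P^\epsilon_\lambda,P^\epsilon_\lambda]=\mathcal{O}(\epsilon^{10})$ splits into a $\lambda$-linear part, which vanishes automatically because each $P_{2k}$ is a $d_1$-coboundary, and a $\lambda$-independent part, which at order $2n$ is exactly the recursion \eqref{rec2}:
$$d_2 P_{2n}=-\frac12\sum_{k=1}^{n-1}[P_{2k},P_{2(n-k)}],\qquad n=1,2,3,4.$$
These are the \emph{quadratic} relations. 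I would evaluate the Schouten brackets in the super-functional formalism of Section 3, where each bracket becomes a Poisson bracket of superfunctionals; after applying the normalization operator $\mathcal{N}$, each identity reduces to a finite combination of the independent monomials $\theta_1\theta_j\theta_{2n+1-j}$, and matching their coefficients turns the display into a finite linear system (with quadratic inhomogeneities built from the already-known lower-order data) for the coefficient functions of $F_{2n}$.

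Solving order by order would then yield the result: at order two the identity is vacuous and $f_2$ remains free; at order four one gets $f_4'=-2f_2'f_2''$; at orders six and eight one extracts the explicit formulas for $t_0,t_1,t_3$ and for $r_0,r_1,r_3,r_4,r_5,r_6,r_{10}$, leaving $t_2,t_4,t_5,t_6$ and $r_2,r_7,r_8,r_9,r_{11},r_{12},r_{13},r_{14}$ arbitrary. The delicate point—and the one I expect to be the main obstacle—is the consistency (unobstructedness) of this system at orders six and eight: a priori the matching of the several distinct $\theta_1\theta_j\theta_{2n+1-j}$ monomials could over-determine the coefficient functions, and indeed the naive ``optimistic'' ansatz of the Introduction already fails at order six unless $f_2$ is a degree-two polynomial. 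The content of the theorem is that, with the \emph{full} tangency-constrained ansatz and its extra free parameters, the compatibility conditions are automatically satisfied for arbitrary $f_2(u)$, so that a \emph{polynomial} solution exists at every order up to eight—that is, the potential obstruction in the bi-Hamiltonian cohomology group $H^3(\mathcal{L}(\mathbb{R}),\omega_1,\omega_2)$ does not appear in the relevant degrees, and the solution is a genuine differential polynomial rather than merely the rational expression produced by quasitriviality. Because the sizes of these linear systems—fifteen coefficient functions $r_0,\dots,r_{14}$ at order eight, with relations involving derivatives of $f_2$ up to seventh order—lie far beyond hand computation, I would carry out the elimination and the consistency check with \texttt{Maple}, as the statement indicates.
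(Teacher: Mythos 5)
Your proposal follows essentially the same route as the paper's: the paper likewise uses the exactness of the pencil together with Theorem 16 to restrict the ansatz to vector fields whose homogeneous components are total $x$-derivatives (after removing odd orders by Miura transformations), and then solves the quadratic recursion \eqref{rec2} order by order in the $\theta$-formalism of Section 3, matching coefficients of the independent trivector monomials and carrying out the elimination with \texttt{Maple}, the unobstructedness and polynomiality being precisely what that computation verifies. There is no gap.
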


Using the freedom in the choice of the parameters we can simplify the previous expression of $X_{\epsilon}$ and of the corresponding deformation. For instance, up to the sixth order 
 in $\epsilon$ we have the following result, where in order to have a simpler expression, the free 
parameter $f_3$ appearing in Theorem 13 has been fixed to 
0.
\begin{theorem}
Up  to Miura transformations, the deformations of the pencil $P_{\lambda}=u\delta^{1}+\frac{1}{2}u_{(1)}\delta-\lambda \delta^{(1)}$
can be reduced to the following form:
\begin{equation}\label{main.eq}
\begin{split}
P^{\epsilon}_{\lambda}=P_{\lambda}-\epsilon^2\left\{\partial^2_x\left(c_2\delta^{(1)}(x-y)\right)+c_2 \delta^{(3)}(x-y)+(\partial_x c_2)\delta^{(2)}(x-y)\right\}\\
-\epsilon^4 \left\{\partial^4_x \left(c_4 \delta^{(1)}(x-y)\right)+c_4\delta^{(5)}(x-y)+(\partial_x c_4)\delta^{(4)}(x-y)\right\}\\
-\epsilon^6\left\{\partial^6_x\left(c_6\delta^{(1)}(x-y)\right)+c_6 \delta^{(7)}(x-y)+(\partial_x c_6)\delta^{(6)}(x-y)\right\}\\
+\epsilon^6\left\{h \delta^{(3)}(x-y)+(\partial_x h) \delta^{(2)}(x-y)+\partial^2_x\left(h\delta^{(1)}(x-y)\right)\right\}\\
+\epsilon^6\left\{\partial^3_x\left((\partial^2_x g) \delta^{(2)}(x-y)\right)+\partial_x\left((\partial^3_x g)\delta^{(3)}(x-y)\right)+(\partial^2_x g)\delta^{(5)}(x-y)+(\partial^3_x g)\delta^{(4)}(x-y)\right\},
\end{split}
\end{equation}
where 
$$c_2=\f{\d f_2}{\d u},$$ 
$c_4$ and $c_6$ are related to $c_2$ via the following equations:
\begin{equation}\label{c4asc2.eq}
c_4=\frac{\d f_4}{\d u}=-\frac{\d}{\d u}(c_2)^2,
\end{equation}
\begin{equation}\label{c6asc2.eq}
c_6=-\frac{1}{2}\frac{\d}{\d u}\left( c^2_2\; \frac{\d c_2}{\d u} \right),
\end{equation}
while $g$ is given by 
\begin{equation}\label{g.eq}
g=\frac{1}{2}\mathlarger{\int} \left\{\frac{3}{2}c_2^2\;  \frac{\d^3 c_2}{\d u^3}+\left(\frac{\d c_2}{\d u}\right)^3+\frac{19}{3}c_2 \; \frac{\d^2 c_2}{\d u^2}\;\frac{\d  c_2}{\d u} \right\}\; du
\end{equation}
and $h:=h_1+h_2+h_3+h_4$ and the $h_i$'s have the following expression:

\begin{equation}
h_1=\,u_{xx}^{2}\left(\frac {97}{60}c_2 
\left(\frac{\partial^2 c_2}{\partial u^2}\right)^{2}+\frac{8}{3}\,\left(\frac{\partial c_2}{\partial u}\right)^{2}\frac{\partial^2 c_2}{\partial u^2} 
+\frac {21}{40}\, c_2^2\frac {\partial^4 c_2}{\partial u^4}+\frac {49}{15}\,c_2  
\left( \frac{\partial^3 c_2}{\partial u^3} \right) \frac{\partial c_2}{\partial u}\right)
\end{equation}
\begin{equation}
h_2=\,u_x^4 \left(
\begin{split}
\frac {254}{3}
\left(\frac{\partial c_2}{\partial u}\right) ^{2}\frac {\partial^4 c_2}{\partial u^4} +\frac {17}{5}\, \left(c_2\right)^{2}\frac {\partial^6 c_2}{\partial u^6} 
 +\frac{176}{3}\,c_2 \left(\frac {\partial^3 c_2}{\partial u^3}\right) ^{2}\\ +\frac {4018}{45}\, c_2 
\left( \frac {\partial ^4 c_2}{\partial u^4}\right) \frac{\partial^2 c_2}{\partial u^2}
+\frac {1684}{45}\,c_2 \frac{\partial^5 c_2}{\partial u^5} \frac{\partial c_2}{\partial u}+\frac {14512}{45}\,\left( \frac{\partial c_2}{\partial u}\right)  \left(  \frac{\partial^2 c_2}{\partial u^2}\right) \frac{\partial^3 c_2}{\partial u^3}
\end{split}
\right)
\end{equation}
\begin{equation}
h_3=u_{xxx}u_{x}\left( \frac{3}{10}\, c_2^2\frac{\partial^4 c_2}{\partial u^4} +\frac{2}{3}\, \left(\frac{\partial c_2 }{\partial u} \right) ^{2}\frac {\partial^2 c_2}{\partial u^2} +\frac{1}{15}\,c_2 \left( \frac{\partial^2 c_2}{\partial u^2}\right) ^{2}+\frac {28}{15}\,c_2 \left(\frac{\partial^3 c_2}{\partial u^3} \right) \frac{\partial c_2}{\partial u}\right)
\end{equation}
\begin{equation}
h_4=u_{xx}u_{x}^{2}\left(
\begin{split} \frac{139}{10}\, 
\left(\frac{\partial c_2}{\partial u} \right)  \left(  \frac{\partial^2 c_2}{\partial u^2} \right) ^{2}+\frac {178}{15}\,
\left(\frac{\partial c_2}{\partial u}\right) ^{2} \frac{\partial^3 c_2}{\partial u^3} 
+\frac {21}{20}\, c_2^2 \frac{\partial^5 c_2}{\partial u^5}\\
+\frac {259}{30}\,c_2  
\left(  \frac{\partial^4 c_2}{\partial u^4} \right)  \frac{\partial c_2}{\partial u} +13\,c_2 
\left( \frac{\partial^2 c_2}{\partial u^2} \right)  
\frac{\partial^3 c_2}{\partial u^3}
\end{split}
\right).
\end{equation}
\end{theorem}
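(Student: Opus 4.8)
The plan is to reduce the Poisson condition $[P^\epsilon_\lambda,P^\epsilon_\lambda]=0$ to a finite, solvable system of ODEs in $u$ for a finite set of unknown coefficient functions, exploiting the exactness of the undeformed pencil to fix the shape of the generating vector field \emph{before} any Schouten bracket is computed. Since here $g(u)=u$ is affine, Lemma~\ref{exactclassification} shows the undeformed pencil $\omega_2-\lambda\omega_1$ is exact, so the tangency theorem of the previous section applies: the deformation vector field $X_\epsilon$ (which exists by Miura quasitriviality and may be chosen even in $\epsilon$) is tangent to the symplectic leaves of $\omega_1$. As $\omega_1$ has the single Casimir $\int_{S^1}u\,dx$, tangency means $\int_{S^1} F_{2k}\,dx=0$ for the leading ($\partial/\partial u$) component $F_{2k}$ of each homogeneous piece $X_{2k}$, i.e.\ $F_{2k}=\partial_x G_{2k}$ with $G_{2k}$ a homogeneous differential polynomial of degree $2k-1$. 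First I would therefore take $G_{2k}$ to be the most general such polynomial with $u$-dependent coefficients: for $2k=2,4$ this collapses (after an integration) to $\partial_x^2 f_2$ and $\partial_x^4 f_4$, while for $2k=6,8$ it is the $7$- and $15$-term sums in the statement with unknown functions $t_0(u),\dots,t_6(u)$ and $r_0(u),\dots,r_{14}(u)$. The essential gain from invoking exactness is that the tangency (``linear'') constraints are built into the ansatz, so only the ``quadratic'' constraints of the Jacobi identity remain, and the earlier failure of the naive $c_k$-ansatz at order six is bypassed because the present ansatz is the complete space of degree-$2k$ tangent fields.

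Next I would impose the Poisson condition order by order. Writing $P_{2k}=d_1 X_{2k}$, compatibility with $\omega_1$, namely $d_1 P_{2k}=0$, is automatic from $d_1^2=0$, so the only content is the even-order specialization of the recursion \eqref{rec2}, $d_2 P_{2k}=-\tfrac12\sum_{j=1}^{k-1}[P_{2j},P_{2(k-j)}]$. I would evaluate every term in the superfunction formalism of Section~3, using the Schouten formula \eqref{schouten2} with $\omega_1=\theta\theta_1$, $\omega_2=u\,\theta\theta_1$ and the explicit variational derivatives $\delta(d_1X_{2k})/\delta u$, $\delta(d_1X_{2k})/\delta\theta$ computed exactly as in the proof of Theorem~\ref{def5th.th}. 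Expanding the result in the basis of homogeneous trivectors $\theta_1\theta_i\theta_j$ and setting each coefficient to zero yields, at order $2k$, a system that expresses some of the coefficient functions in terms of the remaining ones and of derivatives of $f_2$. Order $\epsilon^4$ reproduces $\partial_u f_4=-\partial_u(\partial_u f_2)^2$; order $\epsilon^6$ then fixes $t_0,t_1,t_3$, leaving $t_2,t_4,t_5,t_6$ free; and order $\epsilon^8$ fixes $r_0,r_1,r_3,r_4,r_5,r_6,r_{10}$ in terms of $f_2$ and the surviving free parameters, giving the displayed formulas.

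The hard part will be order $\epsilon^8$. The trivectors $[P_2,P_6]$, $[P_4,P_4]$, $[P_6,P_2]$ together with $d_2P_8$ expand into a large number of monomials, and one must show that the resulting overdetermined linear system for $r_0,\dots,r_{14}$ — one equation per independent $\theta_1\theta_i\theta_j$ — is \emph{consistent}. This consistency is precisely the unobstructedness asserted: by the Liu--Zhang quasitriviality a deformation certainly exists at every order, but a priori it is generated by a vector field with rational or logarithmic Hamiltonians, so the genuine content is that up to order eight no such non-polynomial correction is forced, i.e.\ the polynomial tangent ansatz already closes. I expect no conceptual surprise: the only non-polynomial (logarithmic) cohomology classes, coming from $H^2_3(\mathcal L(\mathbb R),\omega_1,\omega_2)$, sit in the odd degree three we eliminate by setting $f_3=0$, and are therefore absent from the even-order deformation. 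The remaining task is then the honest algebraic verification, which I would delegate to Maple, that the coefficient equations are mutually compatible and pin the dependent coefficients down uniquely. The residual freedom in $t_2,\dots,r_{14}$ is coboundary (Miura) ambiguity, and a convenient choice of these parameters collapses $X_\epsilon$ to the compact form recorded in the companion theorem.
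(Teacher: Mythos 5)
Your proposal is correct and follows essentially the same route as the paper: exactness of the pencil gives the tangency constraint of Section 5, which you build into a complete polynomial ansatz $F_{2k}=\partial_x G_{2k}$ (the $7$- and $15$-term sums at degrees six and eight); the recursion \eqref{rec2} is then imposed order by order in the $\theta$-formalism, the consistency of the resulting coefficient equations is delegated to Maple exactly as in the paper, and the compact sixth-order form of the statement is extracted from the residual Miura freedom in the undetermined parameters together with the choice $f_3=0$. One correction to an aside that is fortunately not load-bearing: the logarithmic classes of $H^2_3(\mathcal{L}(\mathbb{R}),\omega_1,\omega_2)$ are not the odd-order objects removed by setting $f_3=0$ --- in the paper's grading $H^2_{k+1}$ governs order-$\epsilon^k$ deformations, so $H^2_3$ is precisely the space of nontrivial $\epsilon^2$ deformations parametrized by the central invariant $c_2=\partial f_2/\partial u$, and the fact that no non-polynomial correction is forced at orders six and eight is genuinely the content of the explicit computation (as you yourself state when delegating it to Maple), not a consequence of cohomological vanishing.
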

Let us observe then, that although deformations up to the fourth order in $\epsilon$ appear to have a quite regular structure, the pattern is broken already at the sixth order. 

\section{Conclusions}
In this paper we proved that the linear relations on the unknown coefficients of the vector fields $X_{2k}$ generating the deformations, obtained by solving recursively \eqref{rec2} have a geometric meaning in the case of the pencil $\omega_{\lambda}$ we considered. The linear relations entail the tangency of $X_{2k}$ to the symplectic leaves of $\omega_1$ and this geometric property is equivalent to the exactness of the pencil $\omega_{\lambda}$. 
How much of this extends to the case of systems is not clear. Although an essential ingredient in our proof, namely the Miura-quasi triviality, has been established also for systems of hydrodynamic type, preliminary computations show that this is not enough to relate the exactness of the pencil to the tangency of the vector fields generating the deformation. This will be explored elsewhere.

Our computation about deformations up to the eighth order shows, unfortunately, that it is not feasible to guess a general formula for either the vector fields $X_{2k}$ or the Poisson structures, even though deformations up to the fourth order might have suggested otherwise. On the other hand, the formulas we have found might turn out to be useful in guessing general formulas not for arbitrary central invariants, but for specific ones. Indeed for a specific choice of central invariant, the formulas simply drastically (even to the eighth order) and some patterns start to emerge. It is seems therefore reasonable that using these formulas, using a specific central invariant, one might be able to guess a general structure (for that specific central invariant) and construct the associated new integrable PDE. 

\bigskip

{\bf Acknowledgements} The authors gratefully acknowledge partial support from the University of Toledo via the  URAF grant  ``Lax equations, integrable systems and application". We also want to thank the Department of Mathematics of the University of Milano - Bicocca and of the University of Toledo for the supportive environment they provided and for their hospitality while this work was completed. One of the authors (P. Lorenzoni) warmly thanks B. Dubrovin for having introduced him to the problem of bi-Hamiltonian deformations of PDEs of hydrodynamic type during his PhD thesis. 

\section{Appendix}
Proof of Lemma 9
\begin{proof}
Since the proof is based on combinatorial identities, we focus on the proof of \eqref{ordinaryvariationa1.eq}, since the proof of \eqref{ordinaryvariational2.eq} is entirely similar. Using direct substitution of \eqref{variationalhighertheta.eq} in \eqref{ordinaryvariationa1.eq} we have: 
$$\frac{\partial}{\partial \theta^i_k} =\sum_{j=k}^{\infty} \binom{j}{k} \partial^{j-k}_x \sum_{l=0}^{\infty} \binom{j+l}{j}\partial^l_x\frac{\partial}{\partial \theta^i_{l+j}}=$$
$$\sum_{j=k}^{\infty}\sum_{l=0}^{\infty} (-1)^l\binom{j}{k}\binom{j+l}{j}\partial^{j+l-k}_x\frac{\partial}{\partial \theta^i_{l+j}}.$$
Setting $q=l+j$ we get that the previous expression is equal to: 
$$\sum_{j=k}^{\infty}\sum_{q=j}^{\infty}(-1)^{q-j}\binom{j}{k}\binom{q}{j}\partial^{q-k} \frac{\partial}{\partial \theta^i_{q}} =
\sum_{q=k}^{\infty}\sum_{j=k}^{q}(-1)^{q-j}\binom{j}{k}\binom{q}{j}\partial^{q-k} \frac{\partial}{\partial \theta^i_{q}},$$
where this last equality comes from exchanging vertical with horizontal summation in a lattice. 
Finally using the identity 
$$\sum_{j=k}^q (-1)^j \binom{j}{k}\binom{q}{j}=(-1)^q \delta^q_{k},$$
where $\delta^q_{k}$ is Kronecker delta, 
we obtain that the last expression is equal to
$$(-1)^q\sum_{q=k}^{\infty}(-1)^q \delta^q_{k}\partial^{q-k} \frac{\partial}{\partial \theta^i_{q}}= \frac{\partial}{\partial \theta^i_{k}},$$
thus proving the identity. 
The proof of formula \eqref{ordinaryvariational2.eq} is completely analogous. 
\end{proof}

Proof of Lemma 12 
 \proof
 By definition we have
 $$ \frac{\delta}{\delta u^i}(f\theta^j_{p})=\sum_{r=0}^{\infty}(-1)^r \partial^r\left( \frac{\partial}{\partial u^i_{(r)}}(f\theta^j_{p})\right)$$
 $$=\sum_{r=0}^{\infty}(-1)^r \partial^r\left( \frac{\partial f}{\partial u^i_{(r)}}\theta^j_{p}\right)=\sum_{r=0}^k(-1)^r \partial^r\left( \frac{\partial f}{\partial u^i_{(r)}}\theta^j_{p}\right),$$
 where the last equality is due to the fact that $f$ is homogenous of degree $k$. Now we have
 $$\sum_{r=0}^k(-1)^r \partial^r\left( \frac{\partial f}{\partial u^i_{(r)}}\theta^j_{p}\right)=\sum_{r=0}^k(-1)^r \sum_{h=0}^r\binom{r}{h}\partial^h\left( \frac{\partial f}{\partial u^i_{(r)}}  \right) \theta^j_{p+r-h}.$$
 Setting $r-h=l$ and remembering that $f$ is homogenous of degree $k$, the previous sum can be re-written as
 $$\sum_{l=0}^k \sum_{h=0}^{k-l}(-1)^{l+h}\binom{l+h}{h}\partial^h\left( \frac{\partial f}{\partial u^i_{(l+h)}}  \right) \theta^j_{p+l}=$$
$$=\sum_{l=0}^k (-1)^l\left[\sum_{h=0}^{k-l}(-1)^{h}\binom{l+h}{h}\partial^h\left( \frac{\partial f}{\partial u^i_{(l+h)}}  \right) \right]\theta^j_{p+l} $$ 
$$=\sum_{l=0}^k (-1)^l \frac{\delta}{\delta u^i_{(l)}}(f)\; \theta^j_{p+l},$$
where the last equality holds by definition of higher order variational derivative (see \eqref{variationalhigheru.eq}) and the fact that $f$ is homogenous of degree $k$. 
 \endproof

\bibliographystyle{plain}

\end{document}